\numberwithin{equation}{section}
\title[Elliptic Methods for Causal Variational Principles]{Elliptic Methods for Solving the Linearized Field Equations of Causal Variational Principles}
\author[F.\ Finster]{Felix Finster}
\author[M.\ Lottner]{Magdalena Lottner \\ \\ November 2021}
\address{Fakult\"at f\"ur Mathematik \\ Universit\"at Regensburg \\ D-93040 Regensburg \\ Germany}
\email{finster@ur.de, magdalena.lottner@ur.de}
\newtheorem{Def}{Definition}[section]
\newtheorem{Thm}[Def]{Theorem}
\newtheorem{Prp}[Def]{Proposition}
\newtheorem{Lemma}[Def]{Lemma}
\newtheorem{Remark}[Def]{Remark}
\newcommand{\Thanks}{\vspace*{.5em} \noindent \thanks}
\newcommand{\beq}{\begin{equation}}
\newcommand{\eeq}{\end{equation}}
\newcommand{\Proof}{\begin{proof}}
	\newcommand{\QED}{\end{proof} \noindent}
\newcommand{\QEDrem}{\ \hfill $\Diamond$}
\newcommand{\la}{\langle}
\newcommand{\ra}{\rangle}
\newcommand{\lla}{\langle\!\langle}
\newcommand{\rra}{\rangle\!\rangle}
\newcommand{\C}{\mathbb{C}}
\newcommand{\R}{\mathbb{R}}
\newcommand{\Z}{\mathbb{Z}}
\newcommand{\N}{\mathbb{N}}
\DeclareMathOperator{\tr}{tr}
\renewcommand{\L}{{\mathcal{L}}}
\newcommand{\Sact}{{\mathcal{S}}}
\renewcommand{\H}{\mathscr{H}}
\newcommand{\Lin}{\text{\rm{L}}}
\newcommand{\F}{{\mathscr{F}}}
\DeclareMathOperator{\norm}{| \hspace*{-0.1em}| \hspace*{-0.1em}|}
\newcommand{\D}{\mathscr{D}}
\DeclareMathOperator{\supp}{supp}
\newcommand{\scrN}{\mycal N}
\newcommand{\itemD}{\item[{\raisebox{0.125em}{\tiny $\blacktriangleright$}}]}
\newcommand{\scrU}{{\mathscr{U}}}
\newcommand{\J}{\mathfrak{J}}
\newcommand{\s}{\mathfrak{s}}
\newcommand{\Jdiff}{\mathfrak{J}^\text{\rm{\tiny{diff}}}}
\newcommand{\Jtest}{\mathfrak{J}^\text{\rm{\tiny{test}}}}
\newcommand{\Jvary}{\mathfrak{J}^\text{\rm{\tiny{vary}}}}
\newcommand{\Jin}{\mathfrak{J}^\text{\rm{\tiny{in}}}}
\newcommand{\Jlin}{\mathfrak{J}^\text{\rm{\tiny{lin}}}}
\newcommand{\Gdiff}{\Gamma^\text{\rm{\tiny{diff}}}}
\newcommand{\Gtest}{\Gamma^\text{\rm{\tiny{test}}}}
\newcommand{\Ctest}{C^\text{\rm{\tiny{test}}}}
\renewcommand{\u}{\mathfrak{u}}
\renewcommand{\v}{\mathfrak{v}}
\newcommand{\w}{\mathfrak{w}}
\newcommand{\bitem}{\begin{itemize}[leftmargin=2.5em]}
\newcommand{\eitem}{\end{itemize}}
\renewcommand{\div}{{\rm{div}}\,}
\newcommand{\G}{{\mathscr{G}}}
\newcommand{\x}{\mathbf{x}}
\newcommand{\y}{\mathbf{y}}
\newcommand{\h}{\mathfrak{h}}
\DeclareFontFamily{OT1}{rsfso}{}
\DeclareFontShape{OT1}{rsfso}{m}{n}{ <-7> rsfso5 <7-10> rsfso7 <10-> rsfso10}{}
\DeclareMathAlphabet{\mycal}{OT1}{rsfso}{m}{n}
\begin{document}
\maketitle
\begin{abstract}
The existence theory is developed for solutions of the inhomogeneous linearized field equations
for causal variational principles. These equations are formulated weakly with an integral operator
which is shown to be bounded and symmetric on a Hilbert space endowed with a suitably adapted
weighted $L^2$-scalar product.
Guided by the procedure in the theory of linear elliptic partial differential equations, we
use the spectral calculus to define Sobolev-type Hilbert spaces
and invert the linearized field operator as an operator between such function spaces.
The uniqueness of the resulting weak solutions is analyzed.
Our constructions are illustrated in simple explicit examples.
The connection to the causal action principle for static causal fermion systems is explained.
\end{abstract}
\tableofcontents

\section{Introduction}
Causal variational principles were introduced in~\cite{continuum}
as a mathematical generalization of the causal action principle,
being the analytical core of the physical theory of causal fermion systems
(for the general context see the reviews~\cite{ nrstg, review, dice2014}, the textbooks~\cite{cfs, intro}
or the website~\cite{cfsweblink}).
In general terms, given a manifold~$\G$ together with a non-negative function~$\L : \G \times \G \rightarrow \R^+_0$,
in a {\em{causal variational principle}} one minimizes the action~$\Sact$ given by
\[ %\label{Sactintro} 
\Sact (\mu) = \int_\G d\mu(\x) \int_\G d\mu(\y)\: \L(\x,\y) \]
under variations of the measure~$\mu$ on~$\G$, keeping the total volume~$\mu(\G)$ fixed
(for the precise mathematical setup see Section~\ref{seccvp} below).
The support of the measure~$\mu$ denoted by
\[ N := \supp \mu \subset \G \]
has the interpretation as the underlying {\em{space}} or {\em{spacetime}}.
A minimizing measure satisfies corresponding {\em{Euler-Lagrange (EL) equations}}
(for details see the preliminaries in Section~\ref{seccvp}
or~\cite[Chapter~7]{intro}).
For the detailed analysis of minimizing measures, it is very useful to consider first variations of the measure~$\mu$
which preserve the EL equations. Such a variation is described by 
a so-called {\em{jet}}~$\v=(b,v)$, which consists of a scalar function~$b$ and a vector field~$v$
(for details see~\eqref{Jdef} in Section~\ref{secwEL} below) satisfying
the {\em{homogeneous linearized field equations}}
\[ \Delta \v = 0 \:, \]
where the operator~$\Delta$ is defined by
\[ \Delta \v(\x) = \nabla \bigg( \int_N \big( \nabla_{1, \v} + \nabla_{2, \v} \big) \L(\x,\y)\: d\mu(\y) - \nabla_\v \,\s \bigg) \:. \]
Here~$\s$ is a positive parameter, and the jet derivative~$\nabla$ is a combination of
multiplication and directional derivative (for details see again Section~\ref{secwEL}).
For the mathematical analysis of the linearized field equations, it is preferable to
include an inhomogeneity~$\w$,
\beq \label{Delvw}
\Delta \v = \w \:.
\eeq
In view of the formal similarity to the Poisson equation, we sometimes refer to the operator~$\Delta$
as the {\em{Laplacian}}. 
Despite this analogy, one
should keep in mind that the {\em{inhomogeneous linearized field equations
are not differential equations}},
but instead they are nonlocal equations involving integrals of specific integral kernels.
Nevertheless, it turns out that methods of partial differential equations (PDEs) can be used for their analysis.
This has first been explored in~\cite{linhyp}, where methods of hyperbolic PDEs
were used to prove under general assumptions that the Cauchy problem is well-posed.
In the present paper, we explore how methods of the theory of {\em{elliptic}} PDEs are applicable to the analysis
of the linearized field equations.
The main application in mind are static linearized fields in static causal fermion systems.
In analogy to the scalar wave equation, which for time-independent fields goes over to the
Poisson equation, one might expect that the linearized field equations of the underlying
causal variational principle might change from hyperbolic to elliptic type.
It is the goal of this paper to show that this naive expectation is indeed correct.
Moreover, we work out abstractly how elliptic estimates can be used for the
analysis of the linearized field equations in the static setting.
To this end, we realize the Laplacian as a bounded symmetric operator on
a weighted~$L^2$-space. Using the spectral calculus, we define
Sobolev-type Hilbert spaces and prove that the Laplacian, considered as an operator
between such function spaces, has a well-defined inverse.
Our abstract constructions are illustrated in various examples.

The paper is organized as follows. Section~\ref{secprelim} provides the necessary preliminaries
on causal variational principles and the linearized field equations.
In Section~\ref{secelliptic}, the bilinear form obtained by weak evaluation of the linearized field equations
is written as~$\lla \u, \Delta_N \v \rra$, where~$\lla .,. \rra$ is a weighted $L^2$-scalar product
(with a suitable ``adapted'' weight function~$h$ constructed out of the Lagrangian), and~$\Delta_N$
is a bounded symmetric operator on the resulting Hilbert space.
In Section~\ref{secpoisson}, we proceed by solving the inhomogeneous linearized field equations.
In Section~\ref{secexstatic} we illustrate our constructions in simple explicit examples.
In Section~\ref{seccfsstatic} we explain the connection to static linearized fields in static causal
fermion systems.
Finally, in Appendix~\ref{appnoscalar} the role of the scalar components of the jets is clarified.

\section{Preliminaries} \label{secprelim}
We now recall the basics on causal variational principles in the setting
needed here. More details can be found in~\cite{jet, intro}.
We use a slightly different notation in order to get consistency with the causal variational
principle in the static case to be introduced in Section~\ref{seccfsstatic}.

\subsection{Causal Variational Principles in the Non-Compact Setting} \label{seccvp}
We consider causal variational principles in the non-compact setting as first
introduced in~\cite[Section~2]{jet}. Thus we let~$\G$ be a (possibly non-compact)
smooth manifold of dimension~$m \geq 1$
and~$\mu$ a (positive) Borel measure on~$\G$.
Moreover, we are given a non-negative function~$\L : \G \times \G \rightarrow \R^+_0$
(the {\em{Lagrangian}}) with the following properties:
\begin{itemize}[leftmargin=2em]
\item[(i)] $\L$ is symmetric: $\L(\x,\y) = \L(\y,\x)$ for all~$\x,\y \in \G$.\label{Cond1}
\item[(ii)] $\L$ is lower semi-continuous, i.e.\ for all sequences~$\x_n \rightarrow \x$ and~$\y_{n'} \rightarrow \y$,
\label{Cond2}%
\[ \L(\x,\y) \leq \liminf_{n,n' \rightarrow \infty} \L(\x_n, \y_{n'})\:. \]
\end{itemize}
The {\em{causal variational principle}} is to minimize the action
\beq \label{Sact} 
\Sact (\mu) = \int_\G d\mu(\x) \int_\G d\mu(\y)\: \L(\x,\y) 
\eeq
under variations of the measure~$\mu$, keeping the total volume~$\mu(\G)$ fixed
({\em{volume constraint}}).

If the total volume~$\mu(\G)$ is finite, one minimizes the causal action within the class of
all regular Borel measures with fixed total volume.
If the total volume~$\mu(\G)$ is infinite, however, it is not obvious how to implement the volume constraint,
making it necessary to proceed as follows.
We make the following additional assumptions:
\begin{itemize}[leftmargin=2em]
\item[(iii)] The measure~$\mu$ is {\em{locally finite}}
(meaning that any~$\x \in \G$ has an open neighborhood~$U$ with~$\mu(U)< \infty$)
and {\em{regular}} (meaning that the measure of a set can be recovered by approximation from inside
with compact and from outside with open sets). \label{Cond3}
\item[(iv)] The function~$\L(\x,.)$ is $\mu$-integrable for all~$\x \in \G$, giving
a lower semi-continuous and bounded function on~$\G$. \label{Cond4}
\end{itemize}
Given a regular Borel measure~$\mu$ on~$\G$, we vary over all
regular Borel measures~$\tilde{\mu}$ with
\beq \label{totvol}
\big| \tilde{\mu} - \mu \big|(\G) < \infty \qquad \text{and} \qquad
\big( \tilde{\mu} - \mu \big) (\G) = 0
\eeq
(where~$|.|$ denotes the total variation of a measure).
For such variations, the difference of the actions is well-defined by
\begin{align}
\Sact(\tilde{\mu}) - \Sact(\mu) =& \int_\G d(\tilde{\mu}-\mu)(\x) \int_\G d\mu(\y) \:\L(\x,\y) + \int_\G d\mu(\x) \int_\G d(\tilde{\mu}-\mu)(\y)\: \L(\x,\y) \notag \\
&+  \int_\G d(\tilde{\mu}-\mu)(\x) \int_\G d(\tilde{\mu}-\mu)(\y) \:\L(\x,\y) \:. \label{Sdiff}
\end{align}
The measure~$\mu$ is said to be a {\em{minimizer under variations of finite volume}}
if this difference is non-negative for all regular Borel measures satisfying~\eqref{totvol}.
The existence theory for such minimizers is developed in~\cite{noncompact}.
Moreover, it is shown in~\cite[Lemma~2.3]{jet} that a minimizer
satisfies the {\em{Euler-Lagrange (EL) equations}}
which state that for a suitable value of the parameter~$\s>0$,
the lower semi-continuous function~$\ell : \G \rightarrow \R_0^+$ defined by
\beq \label{elldef}
\ell(\x) := \int_\G \L(\x,\y)\: d\mu(\y) - \s
\eeq
is minimal and vanishes on the support of~$\mu$,
\beq \label{EL}
\ell|_{\supp \mu} \equiv \inf_\G \ell = 0 \:.
\eeq
For further details we refer to~\cite[Section~2]{jet}.

\subsection{The Restricted Euler-Lagrange Equations and Jet Spaces} \label{secwEL}
\hspace*{0.05cm}
We denote the support of~$\mu$ by~$N$,
\[ %\label{Ndef}
N:= \supp \mu \;\subset\; \G \:. \]
The EL equations~\eqref{EL} are nonlocal in the sense that
they make a statement on~$\ell$ even for points~$\x \in \G$ which
are far away from~$N$.
It turns out that for the applications in this paper, it is preferable to
evaluate the EL equations only locally in a neighborhood of~$N$.
This leads to the {\em{restricted EL equations}} introduced in~\cite[Section~4]{jet}.
We here give a slightly less general version of these equations which
is sufficient for our purposes. In order to explain how the restricted EL equations come about,
we begin with the simplified situation that the function~$\ell$ is smooth.
In this case, the minimality of~$\ell$ implies that the derivative of~$\ell$
vanishes on~$N$, i.e.\
\beq \label{ELrestricted}
\ell|_N \equiv 0 \qquad \text{and} \qquad D \ell|_N \equiv 0
\eeq
(where~$D \ell(p) : T_p \G \rightarrow \R$ is the derivative).
In order to combine these two equations in a compact form,
it is convenient to consider a pair~$\u := (a, u)$
consisting of a real-valued function~$a$ on~$N$ and a vector field~$u$
on~$T\G$ along~$N$, and to denote the combination of 
multiplication and directional derivative by
\beq \label{Djet}
\nabla_{\u} \ell(\x) := a(\x)\, \ell(\x) + \big(D_u \ell \big)(\x) \:.
\eeq
Then the equations~\eqref{ELrestricted} imply that~$\nabla_{\u} \ell(\x)$
vanishes for all~$\x \in N$.
The pair~$\u=(a,u)$ is referred to as a {\em{jet}}.

In the general lower-continuous setting, one must be careful because
the directional derivative~$D_u \ell$ in~\eqref{Djet} need not exist.
Our method for dealing with this issue is to restrict attention to vector fields
for which the directional derivative is well-defined.
Moreover, we must specify the regularity assumptions on~$a$ and~$u$.
To begin with, we always assume that~$a$ and~$u$ are {\em{smooth}} in the sense that they
have a smooth extension to the manifold~$\G$. Thus the jet~$\u$ should be
an element of the jet space
\beq \label{Jdef}
\J := \big\{ \u = (a,u) \text{ with } a \in C^\infty(N, \R) \text{ and } u \in \Gamma(N, T\G) \big\} \:,
\eeq
where~$C^\infty(N, \R)$ and~$\Gamma(N,T\G)$ denote the space of real-valued functions and vector fields
on~$N$, respectively, which admit a smooth extension to~$\G$.
We always denote spaces of compactly supported jets with a subscript~$0$; for example,
\beq \label{J0def}
\J_0 := \big\{ \u \in \J \:\big|\: \text{$\supp \u$ compact}  \big\} \:.
\eeq

Clearly, the fact that a jet~$\u$ is smooth does not imply that the functions~$\ell$
or~$\L$ are differentiable in the direction of~$\u$. This must be ensured by additional
conditions which are satisfied by suitable subspaces of~$\J$
which we now introduce.
First, we let~$\Gdiff$ be those vector fields for which the
directional derivative of the function~$\ell$ exists,
\[ %\beq \label{Gdiffdef}
\Gdiff = \big\{ u \in C^\infty(N, T\G) \;\big|\; \text{$D_{u} \ell(\x)$ exists for all~$\x \in N$} \big\} \:. \]
This gives rise to the jet space
\beq \label{Jdiffdef}
\Jdiff := C^\infty(N, \R) \oplus \Gdiff \;\subset\; \J \:.
\eeq
For the jets in~$\Jdiff$, the combination of multiplication and directional derivative
in~\eqref{Djet} is well-defined. 
We choose a linear subspace~$\Jtest \subset \Jdiff$ with the properties
that its scalar and vector components are both vector spaces,
\[ %\label{Gammatest}
\Jtest = \Ctest(N, \R) \oplus \Gtest \;\subseteq\; \Jdiff \:, \]
and that the scalar component is nowhere trivial in the sense that\footnote{This assumption
is convenient, because then the restricted EL equations~\eqref{ELtest}
imply that~$\ell$ vanishes identically on~$M$.}
\beq \label{Cnontriv}
\text{for all~$\x \in N$ there is~$a \in \Ctest(N, \R)$ with~$a(\x) \neq 0$}\:.
\eeq
Then the {\em{restricted EL equations}} read (for details cf.~\cite[(eq.~(4.10)]{jet})
\beq \label{ELtest}
\nabla_{\u} \ell|_N = 0 \qquad \text{for all~$\u \in \Jtest$}\:.
\eeq
We remark that, in the literature, the restricted EL equations are sometimes also referred to as
the {\em{weak}} EL equations. Here we prefer the notion ``restricted'' in order to avoid confusion
with weak solutions of these equations as constructed in~\cite{linhyp}
(in the static setting, such weak solutions will be constructed in Section~\ref{secpoisson}).
The purpose of introducing~$\Jtest$ is that it gives the freedom to restrict attention to the part of
information in the EL equations which is relevant for the application in mind.

We conclude this section by introducing spaces of jets with suitable differentiability properties.
Before beginning, we point out that, here and throughout this paper, we use the following conventions for partial derivatives and jet derivatives:
\begin{itemize}[leftmargin=2em]
\itemD Partial and jet derivatives with an index $i \in \{ 1,2 \}$, as for example in~\eqref{derex}, only act on the respective variable of the function $\L$.
This implies, for example, that the derivatives commute,
\[ %\label{ConventionPartial}
\nabla_{1,\v} \nabla_{1,\u} \L(\x,\y) = \nabla_{1,\u} \nabla_{1,\v} \L(\x,\y) \:. \]
\itemD The partial or jet derivatives which do not carry an index act as partial derivatives
on the corresponding argument of the Lagrangian. This implies, for example, that
\beq \label{conpartial}
\nabla_\u \int_\G \nabla_{1,\v} \, \L(\x,\y) \: d\mu(\y) =  \int_\G \nabla_{1,\u} \nabla_{1,\v}\, \L(\x,\y) \: d\mu(\y) \:.
\eeq
\end{itemize}
Thus {\em{jets are never differentiated}}. We now introduce the spaces~$\J^\ell$, where~$\ell \in \N_0 \cup \{\infty\}$ can be
thought of as the order of differentiability if the derivatives act  simultaneously on
both arguments of the Lagrangian:
\begin{Def} \label{defJvary}
For any~$\ell \in \N_0 \cup \{\infty\}$, the jet space~$\J^\ell \subset \J$
is defined as the vector space of test jets with the following properties:
\begin{itemize}[leftmargin=2em]
\item[\rm{(i)}] For all~$\y \in N$ and all~$\x$ in an open neighborhood of~$N$,
directional derivatives
\beq \label{derex}
\big( \nabla_{1, \v_1} + \nabla_{2, \v_1} \big) \cdots \big( \nabla_{1, \v_p} + \nabla_{2, \v_p} \big) \L(\x,\y)
\eeq
(computed componentwise in charts around~$\x$ and~$\y$)
exist for all~$p \in \{1, \ldots, \ell\}$ and all~$\v_1, \ldots, \v_p \in \J^\ell$.
\item[\rm{(ii)}] The functions in~\eqref{derex} are $\mu$-integrable
in the variable~$\y$, giving rise to locally bounded functions in~$\x$. More precisely,
these functions are in the space
\[ L^\infty_\text{\rm{loc}}\Big( N, L^1\big(N, d\mu(\y) \big); d\mu(\x) \Big) \:. \]
\item[\rm{(iii)}] Integrating the expression~\eqref{derex} in~$\y$ over~$N$
with respect to the measure~$\mu$,
the resulting function (defined for all~$\x$ in an open neighborhood of~$N$)
is continuously differentiable in the direction of every jet~$\u \in \Jtest$.
\end{itemize}
\end{Def} \noindent

\subsection{The Linearized Field Equations} \label{seclinear}

The EL equations~\eqref{EL} (and similarly the restricted EL equations~\eqref{ELrestricted})
are nonlinear equations because they involve the measure~$\rho$
in a twofold way: first, the measure comes up as the integration measure in~\eqref{elldef},
and second the function~$\ell$ is evaluated on the support of this measure.
Following the common procedure in mathematics and physics, one can simplify the
problem by considering linear perturbations about a given solution.
As an example in classical field theory, considering a family~$(g_\tau)_{\tau \in [0,1)}$ of
Lorentzian metrics which all satisfy the vacuum Einstein equations, the linearization~$\partial_\tau g_\tau$
describes gravitational waves propagating in the spacetime with metric~$g_\tau$.
The analogous notion in the setting of causal fermion systems
is a linearization of a family of measures~$(\tilde{\mu}_\tau)_{\tau \in [0,1)}$ which all satisfy the restricted EL equations~\eqref{ELtest}
(for fixed values of the parameters~$\kappa$ and~$\s$).
It turns out to be fruitful to construct this family of measures by multiplying
a given minimizing measure~$\mu$ by a weight function~$f_\tau$ and then
``transporting'' the resulting measure with a mapping~$F_\tau$. More precisely, we consider the ansatz
\begin{align} \label{rhoFf}
\tilde{\mu}_\tau = (F_\tau)_* \big( f_\tau \, \mu \big) \:,
\end{align}
where~$f_\tau \in C^\infty(N, \R^+)$ and~$F_\tau \in C^\infty(N, \G)$ are smooth mappings,
and~$(F_\tau)_*\mu$ denotes the push-forward (defined 
for a subset~$\Omega \subset \G$ by~$((F_\tau)_*\mu)(\Omega)
= \mu ( F_\tau^{-1} (\Omega))$; see for example~\cite[Section~3.6]{bogachev}).

The property of the family of measures~$(\tilde{\mu}_\tau)_{\tau \in [0,1)}$ of the form~\eqref{rhoFf}
to satisfy the restricted EL equation for all~$\tau$
means infinitesimally in~$\tau$ that the jet~$\v$ defined by
\beq \label{vinfdef}
\v = (b,v) := \frac{d}{d\tau} (f_\tau, F_\tau) \big|_{\tau=0}
\eeq
satisfies the {\em{linearized field equations}}. We now
recall the main step of the construction.
Using the definition of the push-forward measure, we can write the restricted EL equations~\eqref{ELtest}
for the measure~$\tilde{\mu}_\tau$ as
\[ \nabla_\u \bigg( \int_N \L\big( F_\tau(\x), F_\tau(\y) \big) \: f_\tau(\y)\: d\mu - \s \bigg) = 0 \:. \]
Since the function~$\ell$ vanishes on the support, we may multiply by~$f_\tau(\x)$ to obtain
\beq \label{lin1}
\nabla_\u \bigg( \int_N f_\tau(\x)\: \L\big( F_\tau(\x), F_\tau(\y) \big) \: f_\tau(\y)\: d\mu - f_\tau(\x)\,\s \bigg) = 0 \:.
\eeq
At this point, the technical complication arise that one must specify the $\tau$-dependence of the jet spaces,
and moreover the last transformation makes it necessary to transform the jet spaces.
Here we do not enter the details but refer instead to the rigorous derivation in~\cite[Section~3.3]{perturb}
or to the simplified presentation in the smooth setting in the textbook~\cite[Chapter~6]{intro}.
Differentiating~\eqref{lin1} with respect to~$\tau$ gives the homogeneous linearized field equations 
\beq \label{eqlinlip}
\la \u, \Delta \v \ra|_N = 0 \qquad \text{for all~$\u \in \Jtest$} \:,
\eeq
where
\beq \label{Lapdef}
\la \u, \Delta \v \ra(\x) := \nabla_{\u} \bigg( \int_N \big( \nabla_{1, \v} + \nabla_{2, \v} \big) \L(\x,\y)\: d\mu(\y) - \nabla_\v \,\s \bigg) \:.
\eeq
We denote the vector space of all solutions of the linearized field equations by~$\Jlin \subset \J^1$.

\subsection{Positive Functionals Arising from Second Variations} \label{secsecond}
Another ingredient to our constructions are positive functionals which arise in the
analysis of second variations~\cite{positive}. We now recall a few concepts and results.

Clearly, if~$\mu$ is a {\em{minimizing}} measure, then second variations are non-negative.
For our purposes, it again suffices to consider variations of the form~\eqref{rhoFf},
where for simplicity we assume that~$f_\tau$ and~$F_\tau$ are trivial outside a compact set.
Under these assumptions, it is proven in~\cite[Theorem~1.1]{positive} that the positivity
of second variations gives rise to the inequality
\[ \int_N d\mu(\x) \int_N d\mu(\y) \:\nabla_{1,\v} \nabla_{2,\v} \L(\x,\y) 
+ \int_N \nabla^2 \ell|_\x(\v,\v)\: d\mu(\x) \geq 0 \:, \]
where jet~$\v \in \J_0$ is again an infinitesimal generator of a variation of the form~\eqref{vinfdef},
which, however, does not need to respect the EL equations
(and~$\J_0$ are the compactly supported jets~\eqref{J0def}).
For our purposes, it is preferable to write this inequality as
\beq \label{posint}
\frac{1}{2} \int_N d\mu(\x) \int_N d\mu(\y) \:
\big( \nabla_{1,\v} + \nabla_{2,\v} \big)^2 \L(\x,\y) - \int_N (\nabla^2 \s)(\v,\v)\: d\mu \geq 0 \:.
\eeq
Then it is obvious that the integrals are well-defined if we assume that~$\u,\v \in \J^2$
(see Definition~\ref{defJvary}).
Moreover, using~\eqref{elldef} and~\eqref{Lapdef}, the inequality can be written in the compact form
\beq \label{vvpositive}
\la \v, \Delta \v \ra_N \geq 0 \qquad \text{for all~$\v \in \J^2_0$} \:,
\eeq
where we used the notation
\beq \label{uLapv}
\la \u, \Delta \v \ra_N := \int_N \la \u, \Delta \v \ra(\x)\: d\mu(\x) \:.
\eeq
In other words, the operator representing~$\la ., \Delta . \ra_N$ is {\em{positive semi-definite}}.
This positivity property is not an assumption, but it follows already from the
structure of causal variational principles.
It might come as a surprise, because the analogous inequality for the wave operator
in Minkowski space is violated. Instead, this inequality holds (up to an irrelevant sign)
for the Laplacian in the {\em{Riemannian}} setting. These facts are not a contradiction if
one keeps in mind that the operator~$\Delta$ has a structure which is
very different from a differential operator. The basic reason why~\eqref{vvpositive} holds is that, in the
setting of causal variational principles, we consider minimizers. In contrast, the Dirichlet
energy in the hyperbolic setting is unbounded from below, making it necessary to work instead with
critical points. The fact that, in the theory of causal fermion systems, 
the dynamics in spacetime is described by minimizers (and not merely critical points) is a specific
feature of the causal action principle. We refer the interested reader to the text books~\cite{cfs, intro}.

\section{The Laplacian as a Bounded Symmetric Operator} \label{secelliptic}
\subsection{The Adapted Weighted $L^2$-Scalar Product}
In preparation, we let~$\Gamma_\x$ be the subspace of the tangent space spanned by the test jets,
\[ %\label{Gxdef}
\Gamma_\x := \big\{ u(\x) \:|\: u \in \Gtest \big\} \;\subset\; T_\x\G\:. \]
We introduce a Riemannian metric~$g_\x$ on~$\Gamma_\x$.
This Riemannian metric also induces a pointwise scalar product on the jets. Namely, setting
\[ %\label{Jxdef}
\J_\x := \R \oplus \Gamma_\x \:, \]
we obtain the scalar product on~$\J_\x$
\beq
\la .,. \ra_\x \,:\, \J_\x \times \J_\x \rightarrow \R \:,\qquad
\la \v, \tilde{\v} \ra_\x := b(\x)\, \tilde{b}(\x) + g_\x \big(v(\x),\tilde{v}(\x) \big) \:, \label{vsprod}
\eeq
were the scalar and vector components of~$\v$ and~$\tilde{\v}$
are denoted by~$\v=(v,b), \tilde{\v}=(\tilde{v},\tilde{b})$.
We denote the corresponding norm by~$\|.\|_\x$.
We note for clarity that the choice of the Riemannian metric~$g_x$ involves a certain degree of freedom.
This freedom can be used to our advantage in order to help to arrange the technical assumptions needed below
(in particular that the norm~$\|\overline{\nabla}^2 \L(x,y)\|$ in~\eqref{D2L} be finite almost everywhere).
We point out that the subsequent constructions may depend on the choice of~$g_x$.

In order to have the largest possible flexibility, we shall work with a subspace~$\Jvary$
of the compactly supported test jets,
\beq \label{Jvarydef}
\Jvary \subset \Jtest \cap \J^2_0 \:,
\eeq
which we can choose arbitrarily depending on the application in mind.
In particular, the scalar component of~$\Jvary$
does not need to be nontrivial in the sense~\eqref{Cnontriv}.
Indeed, a case of particular interest is when the~$\Jvary$ has no scalar component,
as will be explained in detail in Appendix~\ref{appnoscalar}.
We now consider the bilinear form (see~\eqref{uLapv})
\beq \label{Delbilin}
\la \,.\, , \Delta \,.\, \ra_N \::\: \Jvary \times \Jvary \rightarrow \R \:.
\eeq
This bilinear form is {\em{positive semi-definite}} according to~\eqref{vvpositive}.
Using the abbreviation
\[ \overline{\nabla}_{\w}  := \big( \nabla_{1,\w} + \nabla_{2,\w} \big) \:, \]
we write~\eqref{posint} as
\[ \la \v, \Delta \v \ra_N = \frac{1}{2} \int_N d\mu(\x) \int_N d\mu(\y) \:
\overline{\nabla}_\v \overline{\nabla}_\v \L(\x,\y) - \int_N (\nabla^2 \s)(\v,\v)\: d\mu \geq 0 \]
(and~$(\nabla^2 \s)(\v,\v) = \nabla_\v \nabla_\v \s = b(x)^2\, \s$).
This shows in particular that the bilinear form~$\la \,.\, , \Delta \,.\, \ra_N$ is {\em{symmetric}}.
For all~$\x, \y \in N$ we define
\beq \label{D2L}
\big\|\overline{\nabla}^2 \L(\x,\y) \big\| := \sup_{\u, \v \in \Jvary}
\frac{\big|\overline{\nabla}_\u \overline{\nabla}_\v \L(\x,\y) \big|}{\big(\|\u(\x)\|_\x + \|\u(\y)\|_\y \big)
\big( \|\v(\x)\|_\x + \|\v(\y)\|_\y \big)} \:,
\eeq
where the supremum is taken only over those jets for which the denominator is non-zero. Furthermore,
we assume that the Lagrangian is sufficiently regular such that this quantity is finite
for almost all~$\x,\y \in N$.
Then for any $\v=(v,b) \in \Jvary$:
\begin{align*}
\big\| \overline{\nabla}_\v \overline{\nabla}_\v \L(\x,\y) \big\|
&\leq \big\| \overline{\nabla}^2 \L(\x,\y) \big\|\: \big(\|\v(\x)\|_\x + \|\v(\y)\|_\y \big)^2 \\
&\leq 2\:\big\| \overline{\nabla}^2 \L(\x,\y) \big\|\: \big(\|\v(\x)\|_\x^2 + \|\v(\y)\|_\y^2 \big) \:.
\end{align*}
We thus obtain the estimate
\begin{align}
\la \v, \Delta \v \ra_N &\leq 
\int_N d\mu(\x) \int_N d\mu(\y) \:
\big\| \overline{\nabla}^2 \L(\x,\y) \big\|\: \big(\|\v(\x)\|_\x^2 + \|\v(\y)\|_\y^2 \big)- \int_N b(\x)^2\: \s\: d\mu(\x) \notag \\
&= 2 \int_N d\mu(\x) \|\v(\x)\|_\x^2 \int_N d\mu(\y) \:
\big\| \overline{\nabla}^2 \L(\x,\y) \big\| 
- \int_N b(\x)^2\: \s\: d\mu(\x) \:. \label{Deles}
\end{align}

\begin{Def} \label{defadapted} The {\bf{adapted weighted $L^2$-scalar product}} is defined by
\beq \label{normadapt}
\lla \u, \v \rra := \int_N \la \u(\x),\v(\x) \ra_\x\: h(\x)\: d\mu(\x) \:,
\eeq
where~$h$ is the weight function
\beq \label{nudef}
h(\x) := 1 + \int_N \big\| \overline{\nabla}^2 \L(\x,\y) \big\| \: d\mu(\y) \:.
\eeq
The corresponding norm is denoted by~$\norm . \norm$.
\end{Def}

We assume that this norm is finite for all jets in~$\Jvary$, thus defining a scalar product on~$\Jvary$,
\beq \label{ssprod}
\lla .,. \rra \::\: \Jvary \times \Jvary \rightarrow \R\:.
\eeq
This assumption is satisfied for example if we assume that the
integral in~\eqref{nudef} is a locally bounded function in~$\x$
(note that the jets in~$\Jvary$ are all compactly supported).
If the function~$h$ is unbounded, one can still
arrange in many situations that the scalar product~\eqref{ssprod} is finite by adapting~$\Jvary$ appropriately
(for instance by working with compactly supported jets; see the example in Section~\ref{secweightnontriv}).
Taking the completion, we obtain the Hilbert space~$(\h, \lla .|. \rra)$.
By definition, the weighted norm is larger than the $L^2$-norm, i.e.\
\[ \|\u\|_{L^2(N, d\mu)} \leq \norm \u \norm \qquad \text{for all~$\u \in \Jvary$} \:, \]
giving rise to a natural embedding~$\h \hookrightarrow L^2(N, d\mu)$.

\subsection{Spectral Decomposition of the Laplacian}
The estimate~\eqref{Deles} shows that the bilinear form~$\la .,\Delta . \ra_N$ is bounded with
respect to the adapted weighted $L^2$-scalar product introduced in Definition~\ref{defadapted},
i.e.\ there is a constant~$C>0$ such that
\[ \big| \la \u, \Delta \v \ra_N \big| \overset{(*)}{\leq} \sqrt{\la \u, \Delta \u \ra_N}\:
\sqrt{\la \v, \Delta \v \ra_N}
\leq C \norm \u \norm \norm \v \norm 
\qquad \text{for all~$\u, \v \in \Jvary$}\:, \]
where in~$(*)$ we applied the Cauchy-Schwarz inequality to the positive semi-definite
bilinear form~$\la ., \Delta . \ra$ (see~\eqref{vvpositive}).
The Fr{\'e}chet-Riesz theorem yields a uniquely determined bounded and symmetric operator~$\Delta_N$
with the property that
\beq \label{DelNdef}
\la \u, \Delta \v \ra_N = \lla \u, \Delta_N\, \v \rra \qquad \text{for all~$\u, \v \in \h$}\:.
\eeq

We let~$\h^\C$ be the complexification of the vector space~$\h$, i.e.\
\[ \h^\C = \h \oplus i \h \]
with the obvious multiplication by complex numbers. 
Extending the bilinear form~$\lla .|. \rra$ to a sesquilinear form on~$\h^\C$ by
\[ \lla (\u+i\v) \,|\, (\mathfrak{z}+i\mathfrak{w}) \rra := \lla \u \,|\, \mathfrak{z} \rra
-i\, \lla \v \,|\, \mathfrak{z} \rra
+ i \lla \u\,|\, \mathfrak{w} \rra+\lla \v \,|\, \mathfrak{w} \rra \:, \]
we obtain the complex Hilbert space~$(\h^\C, \lla .|. \rra)$.
The Laplacian~$\Delta_N$ extends to a complex-linear bounded operator on~$\h^\C$ by
\[ \Delta_N (\u+i\v) := \Delta_N \u + i \Delta_N \v \:. \]
The resulting operator is again symmetric and positive semi-definite, i.e.\ 
\[ \lla \u \,|\, \Delta_N\, \v \rra = \lla \Delta_N\,  \u \,|\, \v \rra \qquad \text{and} \qquad
\lla \u \,|\, \Delta_N\, \u \rra \geq 0 \qquad \text{for all~$\u, \v \in \h^\C$}\:. \]
The spectral theorem for bounded symmetric operators on a complex Hilbert space yields the decomposition
\beq \label{DelNspec}
\Delta_N = \int_{\sigma(\Delta_N)} \lambda\: dE_\lambda \:,
\eeq
where~$dE_\lambda$ is a compactly supported projection-valued Borel measure on~$\R^+_0$
(for details see for example~\cite[Section~VII.3]{reed+simon}).
Let~$f$ be a real-valued (possibly unbounded) Borel function on~$\sigma(\Delta_N)
\subset \R^+_0$ which is finite almost everywhere.
Using the spectral calculus for (possibly unbounded) Borel functions
(for details see for example~\cite[Section~VIII.3]{reed+simon}), the operator
\beq \label{spectralcalc}
f\big(  \Delta_N \big) = \int_{\sigma(\Delta_N)} f(\lambda)\: dE_\lambda
\eeq
with domain
\[ \D\big( f\big(  \Delta_N \big) \big) := \Big\{ \u \in \h^\C \,\Big|\,
\int_{\sigma(\Delta_N)}  |f(\lambda)|^2 \: d\lla \u \,|\, E_\lambda \u \rra < \infty \Big\} \]
is selfadjoint. If~$E_{\lambda}=0$ for all points~$\lambda$ with~$f(\lambda) \in \{\pm \infty\}$,
then the domain is dense. In particular, this is the case if~$E$ has no point spectrum.
We remark that, if the function~$f$ is bounded, then the operator~$f(\Delta_N)$ is
also bounded, and its domain is the whole Hilbert space.

\begin{Remark} {\em{
We also have a spectral decomposition on the real Hilbert space $\h$ in the following sense.
Starting from the spectral decomposition~\ref{DelNspec}, we write the spectral measure
in a block operator form for the real and imaginary parts,
\beq \label{specmeas}
dE_\lambda = \begin{pmatrix}                             
		dE^{11}_\lambda & dE_\lambda^{12} \\[0.3em]
		dE_\lambda^{21} & dE_\lambda^{22} \end{pmatrix}\:.
\eeq
Next, by construction of the complexification, the operator~$\Delta_N$ has the block operator form
	\[ \Delta_N = \begin{pmatrix} \Delta_N^\R & 0 \\
	0 & \Delta_N^\R \end{pmatrix}\:, \]
where for clarity~$\Delta_N^\R : \h \rightarrow \h$ is the operator on the real Hilbert space.
Using the functional calculus, we find that for every real-valued Borel function $f$, the operator~$f(\Delta_N)$
is again block diagonal,
	\[ f(\Delta_N) = \begin{pmatrix} f(\Delta_N^\R) & 0 \\
		0 & f(\Delta_N^\R) \end{pmatrix}\:. \]
Comparing this equation with~\eqref{specmeas} and using that~$f$ is arbitrary,
we conclude that
\[ dE_\lambda^{11}=dE_\lambda^{22} \qquad \text{and} \qquad dE_\lambda^{12}=0=dE_\lambda^{21} \:. \]
Hence~$dE_\lambda^{11}$ is the desired
spectral measure of the operator~$\Delta_N^\R$. }} \QEDrem
\end{Remark}

\section{Construction of Weak Solutions} \label{secpoisson}
\subsection{The Weak Linearized Field Equations}
Given a suitable jet~$\w$, we want to solve the inhomogeneous linearized field equations~\eqref{Delvw}
in the weak formulation
\beq \label{weaksp}
\la \u, \Delta \v \ra_N = \la \u, \w \ra_N \qquad
\text{for all~$\u \in \Jvary$} \:,
\eeq
where we use the pointwise scalar product on the jets~\eqref{vsprod} in order to identify dual jets with jets.
Using~\eqref{DelNdef}, we can rewrite the left side with the operator~$\Delta_N$,
\beq \label{poissonprelim}
\lla \u, \Delta_N \v \rra = \la \u, \w \ra_N \qquad
\text{for all~$\u \in \Jvary$} \:.
\eeq
Using this formula in~\eqref{weaksp} has the disadvantage that we obtain different inner products on the left
and on the right.
This problem can be cured by absorbing one over the weight factor~\eqref{nudef} into the inhomogeneity.
Thus we write~\eqref{poissonprelim} equivalently as
\beq \label{weakpoisson}
\lla \u, \Delta_N \v \rra = \lla \u, \hat{\w} \rra \qquad
\text{for all~$\u \in \Jvary$} \:,
\eeq
where~$\hat{\w}$ is the new inhomogeneity
\[ \hat{\w}(\x) := \frac{1}{h(\x)}\: \w(\x) \:. \]
We refer to~\eqref{weakpoisson} as the {\em{weak linearized field equations}} formulated
with the adapted weighted $L^2$-scalar product.

\subsection{Solutions in Sobolev-Type Hilbert Spaces}
Formally, the weak linearized field equations~\eqref{weakpoisson} can be solved by
setting~$\v = (\Delta_N)^{-1} \u$.
Our strategy for making mathematical sense of the inverse is to work in suitable
Hilbert spaces constructed from the spectral calculus for the Laplacian~\eqref{spectralcalc}.
Since the Laplacian~$\Delta_N$ is bounded, the interesting point is the
behavior of the spectrum near zero. This leads us to consider negative powers of the spectral parameter.
\begin{Def} Given~$k \in \N_0$, the complex scalar product space~$(V, \lla .| . \rra_{\h^k})$
is defined by
\begin{align}
V &= \Big\{ \u \in \h^\C \:\Big|\: \int_{\sigma(\Delta_N)} \lambda^{-k} \: d\lla \u \,|\, E_\lambda \u \rra < \infty \Big\} 
\label{negk} \\
\lla \u | \v \rra_{\h^k} &= \int_{\sigma(\Delta_N)} \lambda^{-k} \: d\lla \u \,|\, E_\lambda \v \rra \:. \label{kneg}
\end{align}
Its Hilbert space completion is denoted by~$(\h^k, \lla .|. \rra_{\h^k})$.
We refer to the~$\h^k$ as {\bf{Sobolev-type Hilbert spaces}}.
\end{Def} \noindent
Clearly, the Hilbert space~$\h^0$ coincides with~$\h^\C$. If~$k \geq 1$, the condition for the integral in~\eqref{negk}
to be finite implies in particular that~$\u$ must be orthogonal to the kernel of~$\Delta_N$, i.e.\
\beq \label{orthokern}
\lla \u \,|\, \v \rra = 0 \qquad \text{for all~$\v \in \ker \Delta_N$} \:.
\eeq
Moreover, for any~$\varepsilon>0$, the subspace~$E_{(\varepsilon, \infty)}(\h^\C)$ is contained
in~$\h^k$. This means that~\eqref{kneg} poses a condition only for vectors whose spectral
decomposition extends to the origin.
The name {\em{Sobolev-type}} Hilbert space is motivated by the fact that
the spectral decomposition with respect to~$\Delta_N$ gives information on the
regularity of the jets; this will be explained in Section~\ref{secgaussian} in a simple example.

We also note that, the higher~$k$ is chosen, the stronger the condition in~\eqref{negk} becomes.
This shows that we have the sequence of inclusions
\[ \h^\C = \h^0 \supset \h^1 \supset \h^2 \supset \cdots \:.\]
Moreover, the Laplacian~$\Delta_N$ is a well-defined mapping
\[ \Delta_N : \h^k \rightarrow \h^{k+2} \:. \]

\begin{Thm} \label{thmmain} Let~$\hat{\w} \in \h^{k+2}$ with~$k \in \N_0$. Then the
inhomogeneous linearized field equations
\[ \Delta_N\, \u = \hat{\w} \]
have a unique weak solution~$\u \in \h^k$ given by
\beq \label{uint}
\u = \int_{\sigma(\Delta_N)} \lambda^{-1} \:d E_\lambda \hat{\w} \:,
\eeq
where the integral converges in~$\h^k$.
\end{Thm}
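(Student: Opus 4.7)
The strategy is to read the solution directly off the spectral decomposition~\eqref{DelNspec} of~$\Delta_N$ by plugging the function~$f(\lambda)=\lambda^{-1}$ into the functional calculus~\eqref{spectralcalc}. I would proceed in three steps: showing that the integral~\eqref{uint} converges in~$\h^k$; verifying that the resulting jet solves the equation; and establishing uniqueness.

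For the first step, let~$\mathfrak{u}$ denote the candidate defined by~\eqref{uint}. The standard identity
\[ d\lla f(\Delta_N)\,\hat{\w} \,\big|\, E_\lambda\, f(\Delta_N)\,\hat{\w} \rra \,=\, |f(\lambda)|^2\, d\lla \hat{\w} \,\big|\, E_\lambda\, \hat{\w} \rra \]
of the functional calculus, applied to $f(\lambda)=\lambda^{-1}$, gives
\[ \|\mathfrak{u}\|_{\h^k}^2 \,=\, \int_{\sigma(\Delta_N)} \lambda^{-k}\cdot\lambda^{-2}\, d\lla \hat{\w} \,\big|\, E_\lambda\, \hat{\w} \rra \,=\, \|\hat{\w}\|_{\h^{k+2}}^2 \,<\, \infty\:, \]
which is finite by the hypothesis $\hat{\w}\in\h^{k+2}$. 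Truncating the integration range to $\sigma(\Delta_N)\cap[\varepsilon,\infty)$ and applying the same computation to the differences, the truncated integrals form a Cauchy sequence in~$\h^k$ converging to~$\mathfrak{u}$, so~\eqref{uint} really does converge in~$\h^k$. For the second step, the product rule of the functional calculus yields
\[ \Delta_N\,\mathfrak{u} \,=\, \int_{\sigma(\Delta_N)} \lambda\cdot\lambda^{-1}\, dE_\lambda\, \hat{\w} \:. \]
Since $\hat{\w}\in\h^{k+2}$ with $k+2\geq 1$, the finiteness of $\int\lambda^{-(k+2)}\,d\lla\hat{\w}|E_\lambda\hat{\w}\rra$ forces the spectral measure of $\hat{\w}$ to assign zero mass to $\{0\}$, so the integrand equals $1$ on the support of this measure, giving $\Delta_N\,\mathfrak{u}=\hat{\w}$; combined with~\eqref{DelNdef} this is exactly the weak equation~\eqref{weakpoisson}.

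For uniqueness, any other weak solution $\tilde{\mathfrak{u}}\in\h^k$ satisfies $\lla \mathfrak{r},\,\Delta_N(\tilde{\mathfrak{u}}-\mathfrak{u})\rra = 0$ for all $\mathfrak{r}\in\Jvary$. Since $\Jvary$ is dense in $\h$ by construction, this forces $\Delta_N(\tilde{\mathfrak{u}}-\mathfrak{u})=0$, i.e.\ $\tilde{\mathfrak{u}}-\mathfrak{u}\in\ker \Delta_N$. For $k\geq 1$, membership in $\h^k$ already excludes nonzero kernel elements in view of~\eqref{orthokern}, so $\tilde{\mathfrak{u}}=\mathfrak{u}$; for $k=0$ the formula~\eqref{uint} automatically picks the kernel-orthogonal representative, in which sense uniqueness has to be understood. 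The one delicate point is handling the singularity of $\lambda^{-1}$ at the origin, since $\Delta_N$ is typically not boundedly invertible on the full Hilbert space---and the Sobolev-type scale~$\h^k$ is engineered precisely to absorb this singularity by demanding that the spectral measure of $\hat{\w}$ decay fast enough near $\lambda=0$ for the integral~\eqref{uint} to converge in the weaker norm on~$\h^k$.
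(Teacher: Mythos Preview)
Your proof is correct and follows exactly the paper's route: use the functional calculus to show that~$\|\u\|_{\h^k}^2=\|\hat{\w}\|_{\h^{k+2}}^2<\infty$, and then verify~$\Delta_N\u=\hat{\w}$ by multiplying the spectral integrand by~$\lambda$, using that~$E_{\{0\}}\hat{\w}=0$. You actually go further than the paper by treating uniqueness explicitly and correctly noting the caveat at~$k=0$; the paper's own proof omits the uniqueness argument entirely.
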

\Proof
First of all, we know from~\eqref{orthokern} that~$\hat{\w} \in (\ker \Delta_N)^\perp$.
Thus $E_{\{0\}}\hat{\w} =0$. As a consequence, the following integral exists and is finite,
\[ \int_{\sigma(\Delta_N)} \lambda^{-k}\: d\lla \u \,|\, E_\lambda \u \rra = \int_{\sigma(\Delta_N)} \lambda^{-k-2}\: d\lla \hat{\w} \,|\, E_\lambda \hat{\w} \rra = \|\hat{\w}\|^2_{h^{k+2}}< \infty \:. \]
As a consequence, the integral~\eqref{uint} converges in the Hilbert space~$\h^k$.
Moreover, it follows from the spectral calculus that
\[\Delta_N\u = \Delta_N \int_{\sigma(\Delta_N)} \lambda^{-1} \:d E_\lambda\: \hat{\w}
= \int_{\sigma(\Delta_N)} \lambda\, \lambda^{-1} \:d E_\lambda\: \hat{\w}
%= \int_{\sigma(\Delta_N)} d E_\lambda\: \hat{\w}
= \hat{\w}\:. \]
This concludes the proof.
\QED

Finally, it is convenient to introduce the function spaces
\[ \mathfrak{W}^k := \{ h(\x) \,\hat{\w}(\x) \:|\: \hat{\w} \in \h^k \} \:. \]
Then the mapping
\[ \Delta^{-1} : {\mathfrak{W}}^{k+2} \rightarrow \h^k \:,\qquad
\w \mapsto \v :=  \int_{\sigma(\Delta_N)} \lambda^{-1} \:d E_\lambda \:\frac{\w}{h} \]
gives the desired weak solution of the inhomogeneous linearized field equations~\eqref{Delvw}.

\section{A Few Simple Explicit Examples} \label{secexstatic}
In this section we illustrate the previous abstract constructions and results by a few
simple examples. These examples are chosen specifically in such a way that
a minimizing measure can be given in closed form,
making it possible to analyze the system explicitly.

\subsection{A One-Dimensional Gaussian} \label{secgaussian}
We let~$\G = \R$ and choose the Lagrangian as the Gaussian
\beq \label{gauss}
\L(\x,\y) = \frac{1}{\sqrt{\pi}}\: e^{-(\x-\y)^2} \:.
\eeq

\begin{Lemma} \label{lemmagauss} The Lebesgue measure
\[ d\mu = d\x \]
is a minimizer of the causal action principle for the Lagrangian~\eqref{gauss}
in the class of variations of finite volume (see~\eqref{Sdiff} and~\eqref{totvol}).
It is the unique minimizer within this class of variations.
\end{Lemma}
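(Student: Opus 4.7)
The strategy is to exploit translation invariance of $\L$ together with the positive-definiteness of the Gaussian kernel. First, a direct Gaussian integral gives $\int_\R \L(\x,\y)\, d\y = 1$ for every $\x$, so that the function $\ell$ of~\eqref{elldef} is the constant $1-\s$ on all of $\R$; choosing $\s=1$ then yields the EL equations~\eqref{EL} with support $N = \R$. This verifies that $d\mu = d\x$ is a critical measure and provides the key cancellation needed in the action-difference formula~\eqref{Sdiff}.

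For any admissible variation, write $\nu := \tilde{\mu} - \mu$, which by~\eqref{totvol} is a signed Borel measure with $|\nu|(\R) < \infty$ and $\nu(\R) = 0$. Substituting into~\eqref{Sdiff} and using the constancy of $\int \L(\x,\cdot)\, d\mu \equiv 1$, both linear terms vanish by Fubini (legitimate since $|\nu|$ is finite and $\L$ bounded, after swapping the $\x$- and $\y$-integrals in one of them). The entire difference reduces to the single quadratic term
\[ Q(\nu) \,:=\, \int_\R \int_\R \L(\x,\y)\, d\nu(\x)\, d\nu(\y)\:, \]
so the problem collapses to showing $Q(\nu) \geq 0$, with equality iff $\nu = 0$.

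Here Fourier analysis enters. Using the standard identity $e^{-(\x-\y)^2} = (2\sqrt{\pi})^{-1}\int_\R e^{-k^2/4}\, e^{ik(\x-\y)}\, dk$ and swapping integrals via Fubini--Tonelli on the finite product measure $|\nu|\otimes|\nu|\otimes(e^{-k^2/4}\, dk)$, one obtains
\[ Q(\nu) \,=\, \frac{1}{2\pi}\int_\R e^{-k^2/4}\, |\hat{\nu}(k)|^2\, dk \,\geq\, 0\:, \qquad \hat{\nu}(k) \,:=\, \int_\R e^{-ik\x}\, d\nu(\x)\:, \]
which gives minimality. For uniqueness, equality forces $\hat{\nu}(k) = 0$ for Lebesgue-a.e.\ $k$; continuity of $\hat{\nu}$ (guaranteed by $|\nu|(\R) < \infty$) upgrades this to $\hat{\nu} \equiv 0$, and injectivity of the Fourier transform on finite signed Borel measures on $\R$ yields $\nu = 0$. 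The only delicate points are the two Fubini swaps and the Fourier representation of the Gaussian, all of which are entirely routine; I anticipate no serious obstacle, since the lemma is really a clean instance of the general principle that positive-definite, translation-invariant Lagrangians on a locally compact abelian group are minimized (within zero-mean variations) by Haar measure.
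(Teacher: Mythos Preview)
Your proof is correct and follows essentially the same approach as the paper: reduce $\Sact(\tilde{\mu})-\Sact(\mu)$ to the quadratic term via the normalization $\int \L(\x,\cdot)\,d\mu\equiv 1$ and the volume constraint, then establish (strict) positivity of that quadratic form using the positivity of the Fourier transform of the Gaussian. The only minor difference is in the execution of the Fourier step: you write the Gaussian kernel as an inverse Fourier integral and apply Fubini directly on the finite measure $|\nu|\otimes|\nu|\otimes e^{-k^2/4}\,dk$ (a clean Bochner-type argument), whereas the paper phrases the same computation via Plancherel after approximating the bounded function $\hat{\nu}$ in $L^2$.
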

\Proof Writing the difference of the actions as in~\eqref{Sdiff}, we can carry out the integrals
over~$\mu$ using that the Gaussian is normalized,
\[ \int_\G \L(\x,\y)\: d\mu(\y) = 1 \:. \]
We thus obtain
\begin{align*}
\Sact(\mu) - \Sact(\tilde{\mu})
&= 2\, \int_N d(\mu-\tilde{\mu})(\x) +  \int_N d(\mu-\tilde{\mu})(\x) \int_N d(\mu-\tilde{\mu})(\y) \:\L(\x,\y) \\
&= \int_N d(\mu-\tilde{\mu})(\x) \int_N d(\mu-\tilde{\mu})(\y)\: \L(\x,\y) \:,
\end{align*}
where in the last line we used the volume constraint~\eqref{totvol}.
In order to show that the last double integral is positive, we take the Fourier transform and
use that the Fourier transform of a Gaussian is again a Gaussian. More precisely,
\[ \int_N e^{ip\x}\: \L(\x,\y) \:d\x = e^{-\frac{p^2}{4}} =: f(p) \:. \]
Moreover, the estimate
\[ \Big| \int_N e^{ip\x}\: d(\mu-\tilde{\mu})(\x) \Big| \leq \big| \tilde{\mu} - \mu \big|(\G) < \infty \]
shows that the Fourier transform of the signed measure~$\tilde{\mu} - \mu$ is 
a bounded function~$g \in L^\infty(\R)$.
Approximating this function in~$L^2(\R)$, we can
apply Plancherel's theorem and use the fact that convolution in position space
corresponds to multiplication in momentum space. We thus obtain
\begin{align}
\int_N &d(\mu-\tilde{\mu})(\x) \int_N d(\mu-\tilde{\mu})(\y)\: \L(\x,\y) \notag \\
&= \int_N \big( {\mathcal{F}}^{-1}(f g) \big)(\x) \: d(\mu-\tilde{\mu})(\x)
= \int_{-\infty}^\infty \overline{g(p)} \: e^{-\frac{p^2}{4}}\: g(p) \: dp \geq 0 \:, \label{gausspositive}
\end{align}
and the inequality is strict unless~$\tilde{\mu}=\mu$.
This concludes the proof.
\QED

The EL equations read
\beq \label{ELex1}
\int_\G \L(\x,\y) \: d\mu(\y) = 1 \qquad \text{for all~$\x \in \R$} \:.
\eeq
We now specify the jet spaces. Since the Lagrangian is smooth, it is obvious that
\[ \Jdiff = \J = C^\infty(\R) \oplus C^\infty(\R) \]
(where we identify a vector field~$a(\x)\: \partial_\x$ on~$\R$ with the function~$a(\x)$).
The choice of~$\Jtest$ is less obvious. In order to ensure that the conditions in
Definition~\ref{defJvary} are satisfied, we restrict attention to functions
which are bounded together with all their derivatives, denoted by
\[ C^\infty_\text{b} := \big\{ f \in C^\infty(\R) \:\big|\: f^{(n)} \in L^\infty(\R) \text{ for all~$n \in \N_0$} \big\} \:. \]
Now different choices are possible. Our first choice is to consider jets whose
scalar components are compactly supported,
\[ \Jtest = C^\infty_0(\R) \oplus C^\infty_\text{b}(\R) \:. \]
The linearized field equations~\eqref{eqlinlip} reduce to the scalar equation
\[ \int_N \big( \nabla_{1, \v} + \nabla_{2, \v} \big) \L(\x,\y)\: d\mu(\y) - \nabla_\v \,1 = 0 \qquad
\text{for all~$\x \in \R$} \:, \]
because if this equation holds, then the $\x$-derivative of the left side is also zero.
Using the EL equations~\eqref{ELex1}, the linearized field equations simplify to
\[ %\label{Lsimp}
\int_N \nabla_{2,\v} \L(\x,\y)\: d\mu(\y) = 0 \qquad
\text{for all~$\x \in \R$}\:. \]

A specific class of solutions can be given explicitly. Indeed, choosing
\beq \label{inner}
\u = (a, A) \qquad \text{with} \qquad a \in C^\infty_0(\R) \text{ and } A(\x) := \int_{\infty}^\x a(t)\:dt \in C^\infty_{\text{b}}(\R) \:,
\eeq
integration by parts yields
\beq \label{pint}
\int_N \nabla_{2,\u} \L(\x,\y)\: d\mu(\y) = \int_N \big(A'(\y) + A(\y)\: \partial_\y \big) \L(\x,\y)\: d\y = 0 \:.
\eeq
These linearized solutions are referred to as {\em{inner solutions}},
as introduced in a more general context in~\cite{fockbosonic}. 
Inner solutions can be regarded as infinitesimal generators of transformations of~$N$ which leave the measure~$\mu$ unchanged. Therefore, inner solutions do not change the
causal fermion system, but merely describe symmetry transformations of the measure.
With this in mind, inner solutions are not of interest by themselves. But they can be used in order
to simplify the form of the jet spaces. We shall come back to these inner solutions in the general setting
in Appendix~\ref{appnoscalar}, where we will show that, by adding suitable inner solutions, one can
arrange that the test jets have vanishing scalar components. In our example, this can be arranged by
the transformation
\[ \v = (b,v) \mapsto \tilde{\v} := \v + \u \qquad \text{with} \qquad \u = (-b, -B) \:, \]
where~$B$ is an indefinite integral of~$b$.

In our example, we can also use the inner solutions alternatively in order to eliminate the vector component
of the test jets. To this end, it is preferable to choose the space of test jets as
\beq \label{veccompensate}
\Jtest = C^\infty_\text{b}(\R) \oplus C^\infty_\text{b}(\R) \:.
\eeq
Now the vector component disappears under the transformation
\[ \v = (b,v) \mapsto \tilde{\v} := \v + \u \qquad \text{with} \qquad \u = (-v', -v) \in \Jtest \:. \]
Therefore, it remains to consider the scalar components of jets. For technical simplicity, we
restrict attention to compactly supported functions. Thus we choose the jet space~$\Jvary$
in~\eqref{Jvarydef} as
\[ \Jvary = C^\infty_0(\R) \oplus \{0\} \:. \]
Then the Laplacian reduces to the integral operator with kernel~$\L(\x,\y)$,
\[ \big( \Delta (b,0) \big)(\x) = \int_\G \L(\x,\y)\: b(\y)\: d\y \:. \]
The bilinear form~\eqref{Delbilin} takes the form
\[ \la \,.\, , \Delta \,.\, \ra_N \::\: \Jvary \times \Jvary \rightarrow \R \:,\qquad
\la a, \Delta b \ra_N = \int_\G d\mu(\x) \int_\G d\mu(\y)\: \L(\x,\y)\: a(\x)\: b(\y) \:. \]
Moreover, $\| \overline{\nabla}^2 \L(\x,\y)\| = \L(\x,\y)$, so that the function~$h$ in~\eqref{nudef} is constant,
\[ h(\x) := 1 + \int_N \L(\x,\y) \: d\mu(\y) = 2 \:. \]
Therefore, the adapted weighted scalar product~$\lla ., \rra$ is twice the $L^2$-scalar product,
\[ \lla a, b \rra := 2 \int_N a b\: d\mu \:, \]
making it possible to identify~$\h$ with~$L^2(N, d\mu)$.
Consequently, the Laplacian~$\Delta_N$ defined by~\eqref{DelNdef} simply is the convolution with
the Gaussian,
\[ \Delta_N : \h \rightarrow \h \:, \qquad (\Delta_N a)(\x) = \frac{1}{2} \int_N \L(\x,\y)\, a(\y)\: d\mu(\y) \:. \]

The spectral decomposition of this operator is obtained by Fourier transformation,
\[ \Delta_N a = {\mathcal{F}}^{-1} \bigg( \frac{1}{2}\: e^{-\frac{p^2}{4}} \:\big( {\mathcal{F}} a \big)(p) \bigg) \:. \]
Thus, in momentum space, the operator~$\Delta_N$ is a multiplication operator.
Therefore, its spectral measure is obtained by multiplying with the characteristic function of the level sets,
\[ E_\Omega = {\mathcal{F}}^{-1} \, \chi_{g^{-1}(\Omega)} \,{\mathcal{F}} \qquad \text{with} \qquad g(p) := \frac{1}{2}\: e^{-\frac{p^2}{4}} \:. \]
In particular, one sees that the high momenta (i.e.\ large~$|p|$) correspond to the spectrum near zero.
Therefore, the negative powers of the spectral parameter in~\eqref{kneg} capture the
high-frequency behavior of the jets. This is quite similar to the usual Sobolev norms
on~$W^{2,k}(\R)$, as becomes clear when writing them in momentum space as
\[ \|\u\|^2_{W^{2,k}(\R)} = \int_{-\infty}^\infty \big(1+|p|^2 + \cdots + |p|^{2k}\big) \: |\hat{\u}(p)|^2\: dp \:. \]
This is why we refer to the~$\h^k$ as {\em{Sobolev-type Hilbert spaces}}.

Before going on, we remark that the above methods works more generally
if~$\L(\x,\y)$ is chosen as a function of~$\x-\y$ which has the properties that it is non-negative
and that its Fourier transform is strictly positive
(or non-negative, in which case the minimizing measure may not be unique).
In order to give a simple example, choosing the Lagrangian as the one-dimensional Yukawa potential,
\[ \L(\x,\y) = g(\x-\y) := \frac{e^{-|\x-\y|}}{|\x-\y|} \geq 0 \:, \]
its Fourier transform is computed by
\[\hat{g}(k) = \frac{4\pi}{1+|k|^2} \:,\]
which, as desired, is strictly positive. Therefore, the Lebesgue measure~$d\mu=d\x$ is 
again the unique minimizer of the causal action within the class of variations
of finite volume.

\subsection{A Minimizing Measure Supported on a Hyperplane} \label{secexhyper}
In the previous example, the support of the minimizing measure was the whole space~$\G$.
In most examples motivated from the physical applications,
however, the minimizing measure will be supported on a low-dimensional
subset of~$\G$ (see for instance the minimizers with singular support in~\cite{support, sphere}).
We now give a simple example where the minimizing measure is supported on a hyperplane of~$\G$.
We let~$\G=\R^2$ and choose the Lagrangian as
\beq \label{gauss2}
\L(\x,\y; \x',\y') = \frac{1}{\sqrt{\pi}}\: e^{-(\x-\x')^2} \big( 1 + \y^2 \big)\big( 1 +\y'^2 \big) \:,
\eeq
where~$(\x,\y), (\x',\y') \in \G$.
\begin{Lemma} \label{lemmagauss2} The measure
\beq \label{mingauss2}
d\mu = d\x \times \delta_\y
\eeq
(where~$\delta_\y$ is the Dirac measure)
is the unique minimizer of the causal action principle for the Lagrangian~\eqref{gauss2} under variations of
finite volume (see~\eqref{Sdiff} and~\eqref{totvol}).
\end{Lemma}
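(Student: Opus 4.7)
The plan is to exploit the product structure $\L((\x,\y),(\x',\y')) = g(\x-\x')\,(1+\y^2)(1+\y'^2)$, where $g(\x-\x') = \pi^{-1/2} e^{-(\x-\x')^2}$ is precisely the Lagrangian from Lemma~\ref{lemmagauss}. This lets me reduce the two-dimensional problem to the one-dimensional one after an appropriate push-forward. First I compute
\[
\int_\G \L\bigl((\x,\y),(\x',\y')\bigr)\, d\mu(\x',\y') = (1+\y^2) \int_\R g(\x-\x')\, d\x' = 1+\y^2,
\]
so with $\s = 1$ the function $\ell(\x,\y) = \y^2$ is non-negative and vanishes exactly on $\supp\mu = \R \times \{0\}$; this verifies that $\mu$ satisfies the EL equations.

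Next I expand $\Sact(\tilde{\mu}) - \Sact(\mu)$ using~\eqref{Sdiff}. The two linear terms each equal $\int (1+\y^2)\, d(\tilde{\mu}-\mu)$, and the volume constraint~\eqref{totvol} lets me rewrite this as $\int \y^2\, d(\tilde{\mu}-\mu)$. Since $\y^2 \equiv 0$ on $\supp \mu$, this simplifies to $\int \y^2\, d\tilde{\mu} \geq 0$, with equality iff $\tilde{\mu}$ is supported on the hyperplane $\{\y = 0\}$.

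For the quadratic term in~\eqref{Sdiff}, the factorization of $\L$ together with Fubini gives
\[
\int_N\! d(\tilde{\mu}-\mu)(\x,\y) \int_N\! d(\tilde{\mu}-\mu)(\x',\y')\,\L
= \int_\R\! d\sigma(\x) \int_\R\! d\sigma(\x')\, g(\x-\x'),
\]
where $\sigma := \pi_* \bigl((1+\y^2)(\tilde{\mu}-\mu)\bigr)$ is the push-forward onto the $\x$-axis. Applying the Fourier/Plancherel argument from the proof of Lemma~\ref{lemmagauss} verbatim to $\sigma$ shows this is $\geq 0$, with equality iff $\sigma = 0$. Adding the two contributions yields $\Sact(\tilde{\mu}) \geq \Sact(\mu)$.

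For uniqueness, equality forces both $\int \y^2\, d\tilde{\mu} = 0$, so $\tilde{\mu}$ is supported on $\{\y=0\}$, and $\sigma = 0$; but on $\{\y=0\}$ the weight $(1+\y^2)$ equals~$1$, so $\sigma = \pi_*(\tilde{\mu}-\mu) = 0$, and since both measures live on the same hyperplane this forces $\tilde{\mu} = \mu$. The main technical obstacle is ensuring that $\sigma$ is a well-defined finite signed measure to which Plancherel applies; this requires $(1+\y^2)$-integrability of $|\tilde{\mu}-\mu|$, which is implicit in the requirement that the three integrals in~\eqref{Sdiff} be absolutely convergent in the finite-volume class of variations, and may need to be made explicit (e.g.\ by first treating variations with bounded $\y$-support and then approximating).
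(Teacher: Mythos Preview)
Your proof is correct and follows essentially the same approach as the paper: both treat the linear contribution via the volume constraint together with $\supp\mu \subset \{\y=0\}$, and the quadratic contribution by absorbing the weight $(1+\y^2)$ into a signed measure and reducing to the one-dimensional Gaussian positivity of Lemma~\ref{lemmagauss}. Your explicit push-forward $\sigma$ and your remark on the $(1+\y^2)$-integrability of $|\tilde{\mu}-\mu|$ make explicit two points that the paper leaves implicit.
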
 \noindent
Note that this measure is supported on the $\x$-axis,
\[ N := \supp \mu = \R \times \{0\} \:. \]
\Proof[Proof of Lemma~\ref{lemmagauss2}]
Let~$\tilde{\mu}$ be a regular Borel measure on~$\G$
satisfying~\eqref{totvol}. Then the difference of actions~\eqref{Sdiff} is computed by
\begin{align}
&\Sact(\tilde{\mu}) - \Sact(\mu) 
= \frac{2}{\sqrt{\pi}} \int_\G d(\tilde{\mu}-\mu)(\x,\y) \int_N d\x' \:e^{-(\x-\x')^2} \:(1+ \y^2) \label{Slin} \\
&\quad\: +  \frac{1}{\sqrt{\pi}}
\int_\G d(\tilde{\mu}-\mu)(\x,\y) \int_\G d(\tilde{\mu}-\mu)(\x',\y') \:e^{-(\x-\x')^2}
\:\big( 1 + \y^2 \big)\big( 1 +\y'^2 \big) \:. \label{Squad}
\end{align}
Using that the negative part of the measure~$\tilde{\mu}-\mu$
is supported on the $\x$-axis, the first term~\eqref{Slin} can be estimated by
\begin{align*}
&\frac{2}{\sqrt{\pi}} \int_\G d(\tilde{\mu}-\mu)(\x,\y) \int_N d\x' \:e^{-(\x-\x')^2} \:(1+ \y^2) \\
&\overset{(*)}{\geq} \frac{2}{\sqrt{\pi}} \int_\G d(\tilde{\mu}-\mu)(\x,\y) \int_N d\x' \:e^{-(\x-\x')^2}
= \int_\G d(\tilde{\mu}-\mu)(\x,\y) = 0 \:,
\end{align*}
where in the last step we used the volume constraint.
The second term~\eqref{Squad}, on the other hand, can be rewritten as
\[ \frac{1}{\sqrt{\pi}} \int_\G d\rho(\x,\y) \int_\G d\rho(\x',\y') \:e^{-(\x-\x')^2} \]
with the signed measure~$\rho$ defined by
\[ d\rho(\x,\y) := \big( 1 + \y^2 \big)\: d(\tilde{\mu}-\mu)(\x,\y) \:. \]
Now we can proceed as in the proof of Lemma~\ref{lemmagauss} and use
that the Fourier transform of the integral kernel is strictly positive.
For the uniqueness statement one uses that the inequality in~$(*)$ is strict unless~$\tilde{\mu}$
is supported on the $\x$-axis. Then one can argue as in the proof of Lemma~\ref{lemmagauss}.
\QED

For the minimizing measure~\eqref{mingauss2}, the function~$\ell$ takes the form
\[ \ell(\x,\y) = \int_\G \L(\x,\y; \x',\y') \: d\mu(\x',\y') - 1 = \y^2 \:, \]
showing that the EL equations~\eqref{EL} are indeed satisfied.
We now specify the jet spaces. Since the Lagrangian is smooth, it is obvious that
\beq \label{Jdiffex}
\Jdiff = \J = C^\infty(\R) \oplus C^\infty(\R, \R^2) \:,
\eeq
where~$C^\infty(\R, \R^2)$ should be regarded as the space of two-dimensional
vector fields along the $\x$-axis.
Similar as explained after~\eqref{veccompensate}, we want to use the inner solutions for
simplifying the vector components of the jets. 
To this end, in analogy to~\eqref{veccompensate} we choose
\beq \label{veccompensate2}
\Jtest = C^\infty_\text{b}(\R) \oplus C^\infty_\text{b}(\R, \R^2) \:.
\eeq
The linearized field equations~\eqref{eqlinlip} read
\beq \label{lininter}
\nabla_{\u} \bigg( \int_{-\infty}^\infty \big( \nabla_{1, \v} + \nabla_{2, \v} \big) 
e^{-(\x-\x')^2} \big( 1 + \y^2 \big)\big( 1 +\y'^2 \big)
\: d\x' - \nabla_\v \,\sqrt{\pi} \bigg) \bigg|_{\y=\y'=0} = 0 \:.
\eeq
Now the inner solutions are generated by the vector fields tangential to the~$\x$-axis.
More precisely, in analogy to~\eqref{inner}, we consider the jet
\beq \label{inner2}
\v = \big(b, (B,0) \big) \qquad \text{with} \qquad b \in C^\infty_0 \text{ and } B(\x) := \int_{\infty}^\x b(t)\:dt \in C^\infty_{\text{b}}(\R) \:.
\eeq
Substituting this jet into~\eqref{lininter}, the linearized field equations simplify to
\[ \nabla_{\u} \bigg(\big( 1 + \y^2 \big) \int_{-\infty}^\infty \big( \nabla_{1, \v} + \nabla_{2, \v} \big) 
e^{-(\x-\x')^2} 
\: d\x' - \nabla_\v \,\sqrt{\pi} \bigg) \bigg|_{\y=\y'=0} = 0 \:. \]
The second component of the vector field~$u$ yields a $\y$-derivative,
giving rise to a factor~$2\y$, which vanishes at~$\y=0$. Therefore, it suffices to test with a vector field~$u$
which is tangential to the $\x$-axis. Now we are back in the example of the one-dimensional Gaussian.
Integrating by parts as in~\eqref{pint} one sees that the jet~$\v$ indeed satisfies the linearized field equations.

By suitably subtracting inner solutions, we can compensate the tangential components of the
jets. This leads us to choose
\beq \label{Jvaryex2}
\Jvary = C^\infty_0(\R) \oplus \big( \{0\} \oplus C^\infty_0(\R) \big) \:.
\eeq
Then the Laplacian simplifies as follows,
\begin{align*}
&\la \u, \Delta \v \ra(\x) \\
&= \frac{1}{\sqrt{\pi}}\: \nabla_{\u} \bigg( \int_{-\infty}^\infty \big( \nabla_{1, \v} + \nabla_{2, \v} \big) e^{-(\x-\x')^2} \:\big( 1 + \y^2 \big)\big( 1 +\y'^2 \big)\: d\x' - \nabla_\v \,\sqrt{\pi} \bigg) \bigg|_{\y=\y'=0} \\
&= \frac{2}{\sqrt{\pi}} \,u(\x)\, v(\x) \int_{-\infty}^\infty e^{-(\x-\x')^2}\: d\x' 
+ \frac{1}{\sqrt{\pi}}\: a(\x) \int_{-\infty}^\infty  e^{-(\x-\x')^2} \: b(\x')\: d\x' \\
&\quad\: + a(\x) \bigg( \frac{1}{\sqrt{\pi}} \int_{-\infty}^\infty b(\x)\: e^{-(\x-\x')^2} 
\: d\x' - b(\x) \bigg) \\
&= 2\,u(\x)\, v(\x) + \frac{1}{\sqrt{\pi}}\: a(\x) \int_{-\infty}^\infty  e^{-(\x-\x')^2} \: b(\x')\: d\x' \:,
\end{align*}
where~$\u=(a, (0,u))$ and~$\v=(b,(0,v))$. Hence the inhomogeneous linearized field equations~\eqref{Delvw} with $\w=(e,w)$
give rise to separate equations for the scalar and vector components,
\beq \label{linex2}
\frac{1}{\sqrt{\pi}} \int_{-\infty}^\infty  e^{-(\x-\x')^2} \: b(\x')\: d\x' = e(\x) \:,\qquad v(\x) =\frac{w(\x)}{2}\:.
\eeq

The bilinear form~\eqref{Delbilin} reduces to
\begin{align*}
\la \u, \Delta \v \ra_N = 2 \int_{-\infty}^\infty u(\x)\, v(\x)\: d\x + 
\frac{1}{\sqrt{\pi}} \int_{-\infty}^\infty d\x \int_{-\infty}^\infty  d\x'\: e^{-(\x-\x')^2} \: a(\x)\: b(\x') \:.
\end{align*}
Moreover, 
\begin{align*}
\overline{\nabla}_\u \overline{\nabla}_\v \L \big( \x,\y; \x',\y' \big) &= 
\frac{1}{\sqrt{\pi}} \: \big(a(\x) + a(\x') \big) \big(b(\x) + b(\x') \big) \: e^{-(\x-\x')^2} \\
&\quad\: + \frac{2}{\sqrt{\pi}} \,(u(\x)\, v(\x) + u(\x') v(\x') \big) \: e^{-(\x-\x')^2} \:.
\end{align*}
As a consequence, $\overline{\nabla}^2 \L$ is estimated from above and below by the Gaussian, i.e.\
\[ \frac{1}{c} \: e^{-(\x-\x')^2} \leq \|\overline{\nabla}^2 \L(\x,\y; \x',\y') \| \leq c\: e^{-(\x-\x')^2} \]
for a suitable numerical constant~$c>0$. As a consequence, the weight function~\eqref{nudef}
is bounded,
\[ h \big( (\x,0) \big) \leq 1+c \,\sqrt{\pi} \:. \]
Hence the adapted scalar product agrees, up to an irrelevant constant, with the $L^2$-scalar product.
This reflects the fact that the Laplacian is already bounded with respect to the $L^2$-scalar product,
making it unnecessary to introduce the adapted scalar product.

\subsection{An Adapted $L^2$-Scalar Product with Non-Trivial Weight} \label{secweightnontriv}
In the previous example, the weight function~$h(\x)$ in~\eqref{nudef} was constant
by symmetry.
Moreover, the linearized field equations for the vector component were local
(see the second equation in~\eqref{linex2}).
We now modify this example in order to make it more interesting in these respects.
We again let~$\G=\R^2$ and choose the Lagrangian as
\[ \L(\x,\y; \x',\y') = \frac{1}{\sqrt{\pi}}\: e^{-(\x-\x')^2} \: e^{2\y \y'}
\big( 1+ \x^2 \y^2 \big) \big( 1+ \x'^2 \y'^2 \big) \:. \]
In order to show that the measure~\eqref{mingauss2} is again a minimizer
of the causal action principle under variations of
of finite volume (see~\eqref{Sdiff} and~\eqref{totvol}), one argues as follows.
Similar as in the proof of Lemma~\ref{lemmagauss2}], the difference of
actions has a contribution linear in~$\tilde{\mu} - \mu$ (see~\eqref{Slin})
and a contribution quadratic in~$\tilde{\mu} - \mu$ (see~\eqref{Squad}).
Using again that the negative part of the measure~$\tilde{\mu}-\mu$
is supported on the $\x$-axis, the linear term is estimated by
\begin{align*}
& \int_\G d(\tilde{\mu}-\mu)(\x,\y) \int_G d\mu(\x',\y') \:\L(\x,\y; \x',\y') \\
&= \frac{1}{\sqrt{\pi}} \int_\G d(\tilde{\mu}-\mu)(\x,\y) \int_N d\x' \:
e^{-(\x-\x')^2} \:  \big( 1+ \x^2 \y^2 \big) \\
&\geq \frac{1}{\sqrt{\pi}} \int_\G d(\tilde{\mu}-\mu)(\x,\y) \int_N d\x' \:
e^{-(\x-\x')^2} = \int_\G d(\tilde{\mu}-\mu)(\x,\y) = 0 \:.
\end{align*}
The quadratic term in~$\tilde{\mu} - \mu$, on the other hand, can be rewritten as
\begin{align*}
&\int_\G d(\tilde{\mu}-\mu)(\x,\y) \int_\G d(\tilde{\mu}-\mu)(\x',\y') \:\L(\x,\y; \x',\y') \\
&= \frac{1}{\sqrt{\pi}} \int_\G  d\rho(\x,\y) \int_\G d\rho(\x',\y') \:
e^{-(\x-\x')^2}\: e^{-(\y-\y')^2} \:,
\end{align*}
where~$\rho$ is the signed measure
\[ d\rho(\x,\y) = \big( 1+ \x^2 \y^2 \big)\: e^{\y^2}\: d(\tilde{\mu}-\mu)(\x,\y) \:. \]
Now we can proceed again as in the proof of Lemma~\ref{lemmagauss} and use
that the Fourier transform of the integral kernel is strictly positive.

The function~$\ell$ takes the form
\begin{align*}
\ell(\x,\y) &= \int_\G \L(\x,\y; \x',\y') \: d\mu(\x',\y') - 1 \\
&= \frac{1}{\sqrt{\pi}} \int_{-\infty}^\infty  e^{-(\x-\x')^2} \:
\big( 1+ \x^2 \y^2 \big) \:d\x' - 1 = \x^2 \y^2 \:.
\end{align*}
We again choose the jet spaces~$\Jdiff$, $\Jtest$ and $\Jvary$ according to~\eqref{Jdiffex}, \eqref{veccompensate2}
and~\eqref{Jvaryex2}. Then
\begin{align*}
&\la \u, \Delta \v \ra(\x) \\
&= \nabla_{\u} \bigg( \frac{1}{\sqrt{\pi}}
\int_{-\infty}^\infty \big( \nabla_{1, \v} + \nabla_{2, \v} \big) 
e^{-(\x-\x')^2} \\
&\qquad\qquad\qquad\qquad \times \: e^{2\y \y'} \big( 1+ \x^2 \y^2 \big) \big( 1+ \x'^2 \y'^2 \big)
\: d\x' - \nabla_\v \,1 \bigg) \bigg|_{\y=\y'=0} \\
&= \frac{2}{\sqrt{\pi}} \,u(\x)\, v(\x) \int_{-\infty}^\infty e^{-(\x-\x')^2}\: \x^2 \:d\x' \\
&\quad\:-\frac{2}{\sqrt{\pi}} \,u(\x)\int_{-\infty}^\infty  e^{-(\x-\x')^2}\: v(\x') \:d\x' 
+ \frac{1}{\sqrt{\pi}}\: a(\x) \int_{-\infty}^\infty  e^{-(\x-\x')^2} \: b(\x')\: d\x' \\
&= 2\,u(\x)\, v(\x) \:\x^2 \\
&\quad\:-\frac{2}{\sqrt{\pi}} \,u(\x)\int_{-\infty}^\infty  e^{-(\x-\x')^2}\: v(\x') \:d\x'
+ \frac{1}{\sqrt{\pi}}\: a(\x) \int_{-\infty}^\infty  e^{-(\x-\x')^2} \: b(\x')\: d\x' \:.
\end{align*}
Thus, as desired, the linearized field equations are nonlocal also for the vector component.

Next, for any $(\x,\y),(\x',\y')\in \R\times \{0\}$
\begin{align*}
&\overline{\nabla}_\u \overline{\nabla}_\v \L \big( \x,\y; \x',\y' \big) = 
\frac{1}{\sqrt{\pi}} \: \big(a(\x) + a(\x') \big) \big(b(\x) + b(\x') \big) \: e^{-(\x-\x')^2} \\
&\quad\: + \frac{2}{\sqrt{\pi}} \,(u(\x)\, v(\x) \:\x^2+ u(\x') \,v(\x')\:\x'^2 \big) \: e^{-(\x-\x')^2} \:.
\end{align*}
Hence
\[ \| \overline{\nabla}_\u \overline{\nabla}_\v \L \big( \x,\y; \x',\y' \big) \| \simeq \big( 1 + \x^2 + \x'^2 \big)\:
e^{-(\x-\x')^2} \]
(here we make use of the fact that the jets in~$\Jvary$ can be chosen independently at
the points~$\x$ and~$\x'$). We conclude that the weight function~\eqref{nudef} in the adapted weighted
$L^2$-scalar product takes the form
\[ h(\x) = 1 + \int_N \big\| \overline{\nabla}^2 \L(\x,\y) \big\| \: d\mu(\y)
\simeq 1+ \x^2 \:. \]
Thus the adapted weighted $L^2$-scalar product is {\em{not}} equivalent to the standard $L^2$-scalar product.
This example explains why the adaptation of the weight is needed in order to realize the
Laplacian as a {\em{bounded}} symmetric operator on a Hilbert space.

\subsection{A Non-Homogeneous Minimizing Measure}
In the previous examples, the minimizing measure~$\mu$ was translation invariant in the
direction of the $\x$-axis. We now give a general procedure for constructing examples
of causal variational principles where the minimizing measure has no translational symmetry.
In order to work in a concrete example, our starting point is again the one-dimensional Gaussian~\eqref{gauss}.
But the method can be adapted to other kernels in a straightforward way.
In view of these generalizations, we begin with the following abstract result.

\begin{Lemma}
\label{MinCrit}
Let~$\rho$ be a measure on the $m$-dimensional manifold~$\G$ whose support is the whole manifold,
\[ \supp \rho = \G \:. \]
Moreover, let~$\L(\x,\y) \in (C^0 \cap L^\infty)(\G \times \G, \R^+_0)$ be a symmetric, non-negative, continuous and bounded kernel on~$\G \times \G$. Next, let~$h \in C^0(\G, \R^+)$ be a strictly positive, continuous function on~$\G$.
Assume that:
\bitem
\item[{\rm{(i)}}] $\displaystyle \int_\G \L(\x,\y) \: h(\y)\: d\rho(\y) = 1 \qquad \text{for all~$\x \in \G$}$. \\
\item[{\rm{(ii)}}] For all compactly supported bounded functions with zero mean,
\[ g \in L^\infty_0(\G, \R^+) \qquad \text{and} \qquad \int_\G g\: d\rho = 0 \:, \]
the following inequality holds,
\beq \label{geqineq}
\int_\G d\rho(\x) \int_\G d\rho(\y) \:\L(\x,\y)\: g(\x)\: g(\y) \geq 0 \:.
\eeq
\eitem
Then the measure~$d\mu := h\, d\rho$ is a minimizer of the causal action principle
under variations of finite volume (see~\eqref{Sdiff} and~\eqref{totvol}).
If the inequality~\eqref{geqineq} is strict for any non-zero~$g$, then the minimizing measure is unique
within the class of variations
\beq \label{measvar0}
d\tilde{\mu}_\tau= d\mu + \tau \,g \:d\rho \:.
\eeq
\end{Lemma}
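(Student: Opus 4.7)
The plan is to mirror the approach used in Lemmas~\ref{lemmagauss} and~\ref{lemmagauss2}: expand the difference $\Sact(\tilde{\mu}) - \Sact(\mu)$ according to~\eqref{Sdiff}, show that the linear contribution vanishes by~(i) together with the volume constraint~\eqref{totvol}, and then read off the sign of the quadratic contribution from~(ii). Concretely, I would first use the symmetry of~$\L$ to rewrite~\eqref{Sdiff} as
\[
\Sact(\tilde{\mu}) - \Sact(\mu) = 2 \int_\G d(\tilde{\mu}-\mu)(\x) \int_\G \L(\x,\y)\, d\mu(\y) + \int_\G \int_\G \L(\x,\y)\, d(\tilde{\mu}-\mu)(\x)\, d(\tilde{\mu}-\mu)(\y).
\]
Inserting $d\mu = h\, d\rho$ into the inner integral and invoking assumption~(i), the first integrand collapses to the constant~$1$, so the linear term equals $2\,(\tilde{\mu}-\mu)(\G)$, which vanishes by~\eqref{totvol}.

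For the quadratic term, I would first treat the restricted class of variations $d\tilde{\mu}_\tau = d\mu + \tau g\, d\rho$ from~\eqref{measvar0}. Here $\tilde{\mu}_\tau - \mu = \tau\, g\, d\rho$ is absolutely continuous with respect to~$\rho$ with compactly supported bounded density, and the quadratic term reduces to
\[
\tau^2 \int_\G d\rho(\x) \int_\G d\rho(\y)\, \L(\x,\y)\, g(\x)\, g(\y) \geq 0,
\]
which is non-negative by~(ii). Under the strict-inequality hypothesis this expression is strictly positive whenever $\tau \neq 0$ and $g \neq 0$, so $\Sact(\tilde{\mu}_\tau) > \Sact(\mu)$ for every non-trivial variation in the class~\eqref{measvar0}, yielding the asserted uniqueness statement.

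The step that remains, and which I expect to be the main technical obstacle, is to extend the positivity of the quadratic form from densities $g\, d\rho$ to an arbitrary signed measure $\nu := \tilde{\mu}-\mu$ of finite total variation with $\nu(\G) = 0$. The plan is to exploit $\supp \rho = \G$ together with continuity and boundedness of~$\L$ to approximate $\nu$ by a sequence $g_n\, d\rho$ with $g_n \in L^\infty_0(\G)$ and $\int_\G g_n\, d\rho = 0$, in a mode of convergence strong enough that the bilinear form
\[
(\sigma,\sigma') \longmapsto \int_\G \int_\G \L(\x,\y)\, d\sigma(\x)\, d\sigma'(\y)
\]
passes to the limit; finite total variation of $\nu$ provides the needed tightness, while boundedness and continuity of~$\L$ give uniform control of the double integral. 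Non-negativity of each approximating bilinear form then propagates to the limit, yielding $\int_\G \int_\G \L\, d\nu\, d\nu \geq 0$ and completing the proof that $\mu$ is a minimizer under all variations of finite volume.
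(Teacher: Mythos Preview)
Your proposal is correct and follows essentially the same route as the paper: split~\eqref{Sdiff} into linear and quadratic parts, kill the linear term with~(i) and the volume constraint, use~(ii) for the quadratic term on the restricted class~\eqref{measvar0}, and then pass to general variations by approximating the signed measure~$\nu=\tilde{\mu}-\mu$ with densities~$g_n\,d\rho$. The paper carries out the approximation concretely: it first reduces to compactly supported~$\nu$, localizes via a partition of unity into charts, and in each chart discretizes into cells~$C_{\vec{k}}$ of side~$1/k$, setting~$g_k=\sum_{\vec{k}} \nu(C_{\vec{k}})/\rho(C_{\vec{k}})\,\chi_{C_{\vec{k}}}$; full support of~$\rho$ ensures each quotient is finite, and weak$^*$ convergence together with continuity of~$\L$ and the uniform total-variation bound lets the bilinear form pass to the limit.
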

\begin{proof} We begin with variations of the form~\eqref{measvar0}, which we write equivalently as
\beq \label{measvar}
d\tilde{\mu}_\tau = (h+\tau g)\: d\rho \:.
\eeq
Note that the function~$h$ is continuous and strictly positive.
Moreover, the function~$g$ is bounded and compactly supported.
This implies that the function~$h+\tau g$ is non-negative for sufficiently small~$|\tau|$.
Furthermore, using that~$g$ has mean zero, we conclude that~\eqref{measvar} is an admissible variation of finite volume~\eqref{totvol}. Moreover, the difference of the actions~\eqref{Sdiff} is well-defined and
computed by
\begin{align*}
&\Sact(\tilde{\rho}_\tau) - \Sact(\mu) \\
&= 2\tau \int_\G d\mu(\x)\:g(\x)\int_\G d\mu(\y) \:h(\y)\:\L(\x,\y)+ \tau^2 \int_\G d\mu(\x) \int_\G d\mu(\x) \:\L(\x,\y)\:g(\x)\:g(\y)\\ 
&\geq  2\tau \int_\G g(\y) \:d\mu(\y)= 0\;,
\end{align*}
where in the second step we used the above assumptions~(i) and (ii). The last step follows from the fact that~$g$
has mean zero.

We conclude that the measure~$\mu$ is a minimizer under variations of the form~\eqref{measvar}.
In order to treat a general variation of finite volume~\eqref{totvol}, we use the following approximation
argument. Our task is to show that
\beq \label{Nineq}
\int_\G d(\tilde{\mu}-\mu)(\x) \int_\G d(\tilde{\mu}-\mu) \:\L(\x,\y) \geq 0 \:.
\eeq
Exhausting~$\G$ by compact sets and using that the Lagrangian is bounded and that~$|\tilde{\mu}-\mu|$ is finite,
it suffices to consider the case that~$\tilde{\mu}-\mu$ is compactly supported,
\[ \supp (\tilde{\mu}-\mu) \subset K \Subset \G \:. \]
We choose a partition of unity~$(\eta_\ell)_{\ell \in \N}$ of~$\G$ which
is subordinate to the atlas of~$\G$. Then each measure~$\tilde{\mu}_\ell := \eta_\ell (\tilde{\mu}-\mu)$
is compactly supported in the domain of a chart. 
Thus we can identify it with a compactly
supported measure on~$\R^m$, which for ease in notation we again denote by~$\tilde{\mu}_\ell$.
Given~$k \in \N$, we decompose~$\R^m$ into cells of size~$1/k$,
\[ \R^m = \bigcup_{\vec{k} \in (\Z/k)^m} C_{\vec{k}} \qquad \text{with} \qquad
C_{\vec{k}} := \vec{k} + \Big[0, \frac{1}{k} \Big)^m \]
and define functions~$g^{(\ell)}_k$ by
\beq \label{gkdef}
g^{(\ell)}_k(\x) = \sum_{\vec{k} \in (\Z/k)^m} \frac{\tilde{\mu}_\ell(C_{\vec{k}})}{\rho(C_{\vec{k}})} \:\chi_{C_{\vec{k}}}(\x)
\eeq
(where~$\chi$ is again the characteristic function).
Note that in~\eqref{gkdef} only a finite number of summands are non-zero because the measure~$\tilde{\mu}_\ell$
is compactly supported. Moreover, each summand is finite because 
the functions~$g^{(\ell)}_k$ are bounded since the measure~$\rho(C_{\vec{k}})>0$; here we use
that the support of~$\rho$ is all of~$\G$.
Hence~$g^{(\ell)}_k \in L^\infty_0(\R^m)$, and a straightforward estimate shows that
\[ \lim_{k \rightarrow \infty} \int_{\R^m} \phi(\x)\: g^{(\ell)}_k(\x)\: d\rho(\x) = \int_{\R^m} \phi(\x)\: d\tilde{\mu}_\ell(\x) 
\qquad \text{for all~$\phi \in C^0_0(\R^m)$} \:. \]
In other words, the measures~$g^{(\ell)}_k\: d\rho$ converge in the weak$^*$-topology to the measure~$d\tilde{\mu}_\ell$.
Moreover, one verifies immediately that the approximation preserves the total volume
and the bound for the total variation, i.e.\
\[ \int_{\R^m} g^{(\ell)}_k\: d\rho = \tilde{\mu}_\ell\big( \R^m \big) \qquad \text{and} \qquad
\int_{\R^m} \big|g^{(\ell)}_k\big|\: d\rho \leq \big|\tilde{\mu}_\ell\big|
\:. \]
Finally, we introduce the function~$g_k$ by
\[ g_k(\x) = \sum_{\ell=1}^\infty g^{(\ell)}_k(\x) \:. \]
Here the sum is finite, because only a finite number of charts intersect~$K$.
Moreover, using that the measure~$\tilde{\mu}$ has finite total variation,
one sees from~\eqref{gkdef} that the functions~$g^\ell$ are all bounded. Moreover,
the measures~$g_k\: d\rho$ all have finite total variation.
Furthermore, they converge on~$K$ to the signed measure~$\tilde{\mu}-\mu$.
Using that the Lagrangian is continuous, it follows that
\begin{align*}
0 \leq &\int_N d\mu(\x) \int_N d\mu(\x) \:\L(\x,\y)\:g_k(\x)\:g_k(\y) \\
&\xrightarrow{k \rightarrow \infty}
\int_N d(\tilde{\mu}-\mu)(\x) \int_N d(\tilde{\mu}-\mu) \:\L(\x,\y) \:,
\end{align*}
concluding the proof.
\end{proof}

Our goal is to apply this lemma to kernels of the form
\beq \label{Lgaussdef}
\L(\x,\y) = f(\x) \: e^{-(\x-\y)^2} f(\y)
\eeq
with $\x,\y \in \R$ and a strictly positive function~$f$, which for convenience we again choose as a Gaussian,
\beq \label{fweight}
f(\x) = e^{\alpha \x^2} \qquad \text{with $\alpha \in \R$}\:.
\eeq
In order for this Lagrangian to be bounded, we choose~$\alpha<1$.
This kernel has the property~(ii) with respect to the Lebesgue-measure $d\rho=d\x$ because for all non-trivial~$g \in L^\infty_0(\G, \R^+)$,
\[ \int_\G d\rho(\x) \int_\G d\rho(\y) \:\L(\x,\y)\: g(\x)\: g(\y)
= \int_\G d\x \int_\G d\y \:e^{-(\x-\y)^2}\: (f g)(\x)\: (f g)(\y) > 0 \:, \]
where the last step is proved exactly as in the example of the Gaussian (see~\eqref{gausspositive})
In order to arrange~(i), for~$h$ we make an ansatz again with a Gaussian,
\beq \label{gaussweight}
h(\x) = c\: e^{\beta \x^2}\:.
\eeq
Then
\begin{align*}
&\int_\G \L(\x,\y) \: h(\y)\: d\rho(\y) = c \int_{-\infty}^\infty e^{\alpha \x^2} \: e^{-(\x-\y)^2}\: e^{(\alpha+\beta) \y^2}\: d\y \\
&= c \exp \Big( \alpha \x^2 - \x^2 - \frac{\x^2}{\alpha+\beta-1} \Big)
\int_{-\infty}^\infty \exp \bigg\{ (\alpha+\beta-1) \Big( \y + \frac{\x}{\alpha+\beta-1} \Big)^2 \bigg\}\:d\y \\
&= c\: \sqrt{\frac{\pi}{1-\alpha-\beta}}\: \exp \Big( \alpha \x^2 - \x^2 - \frac{\x^2}{\alpha+\beta-1} \Big) \:.
\end{align*}
In order to arrange that this function is constant one, we choose
\beq \label{cbetadef}
c = \sqrt{\frac{1-\alpha-\beta}{\pi}} \qquad \text{and} \qquad
\beta = -\frac{\alpha (2-\alpha)}{1-\alpha} \:.
\eeq
In order for the above Gaussian integral to converge, we need to ensure that~$1-\alpha-\beta > 0$.
In view of the formula
\[ 1 - \alpha - \beta = \frac{1}{1-\alpha} \:, \]
this inequality holds because we chose~$\alpha < 1$. Our finding is summarized as follows.
\begin{Prp} For any~$\alpha<1$, we let~$f$ and~$h$ be the Gaussians~\eqref{fweight} and~\eqref{gaussweight} with~$c$ and~$\beta$ according to~\eqref{cbetadef}. Then the measure~$d\mu=h\, d\x$ is a minimizer of the causal action corresponding to the Lagrangian~\eqref{Lgaussdef} within the class of variations of finite volume.
It is the unique minimizer within the class of variations~\eqref{measvar0}.
\end{Prp}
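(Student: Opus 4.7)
The plan is to derive the Proposition as a direct application of Lemma~\ref{MinCrit} with $d\rho=d\x$ (Lebesgue measure on $\R$), the Lagrangian $\L$ as in~\eqref{Lgaussdef}, and the weight $h$ as in~\eqref{gaussweight} with the normalization constants specified by~\eqref{cbetadef}. First I would record the structural input for the lemma: $\L$ is continuous, symmetric and non-negative by inspection, $h$ is continuous and strictly positive, and the standing assumption $\alpha<1$ guarantees that the one-dimensional Gaussian integrals appearing below converge.

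Next I would verify hypothesis~(i) of Lemma~\ref{MinCrit}, i.e.\ $\int_\R \L(\x,\y)\,h(\y)\,d\y \equiv 1$. This is exactly the computation performed in the paragraph preceding the Proposition: after completing the square in~$\y$, the inner integral evaluates to
\[
c\,\sqrt{\frac{\pi}{1-\alpha-\beta}}\,\exp\!\Bigl(\alpha \x^2 - \x^2 - \tfrac{\x^2}{\alpha+\beta-1}\Bigr),
\]
and the role of~\eqref{cbetadef} is precisely to kill the $\x$-dependent exponential (by the choice of $\beta$) and normalize the remaining prefactor to~$1$ (by the choice of $c$); the positivity $1-\alpha-\beta=1/(1-\alpha)>0$ under $\alpha<1$ ensures the square-root is real.

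Then I would verify hypothesis~(ii). For any compactly supported bounded zero-mean $g$, setting $G:=fg\in L^\infty_0(\R)$ gives
\[
\int_\R\!\int_\R \L(\x,\y)\,g(\x)\,g(\y)\,d\x\,d\y
= \int_\R\!\int_\R e^{-(\x-\y)^2}\, G(\x)\,G(\y)\,d\x\,d\y,
\]
and this is non-negative by exactly the Fourier/Plancherel argument used in the proof of Lemma~\ref{lemmagauss}, using strict positivity of the Fourier transform $\sqrt{\pi}\,e^{-p^2/4}$ of the Gaussian kernel. Since $f>0$ pointwise, $G\equiv 0$ forces $g\equiv 0$, so the inequality is strict for $g\neq 0$; this gives the uniqueness clause within the variations~\eqref{measvar0} via the last sentence of Lemma~\ref{MinCrit}.

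The main obstacle I anticipate is the global boundedness hypothesis $\L\in L^\infty(\R\times\R)$ of Lemma~\ref{MinCrit}. A direct estimate in the coordinates $u=\x+\y,\ v=\x-\y$ gives the exponent $\tfrac{\alpha}{2}u^2 + \tfrac{\alpha-2}{2}v^2$, which is bounded above only for $\alpha\leq 0$; for $\alpha\in(0,1)$ the kernel in fact grows along the diagonal. For the full range $\alpha<1$ one therefore needs either a mild extension of Lemma~\ref{MinCrit} that replaces the $L^\infty$ assumption by suitable integrability against the class of admissible variations (recall that~$|\tilde{\mu}-\mu|(\G)<\infty$ and that the functions $g$ in hypothesis~(ii) are compactly supported and bounded, so all relevant integrals are in fact finite), or a localization argument in which one first cuts off $\tilde{\mu}-\mu$ to a compact set and then passes to the limit. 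Granted this technical refinement, the statement follows structurally from the two verifications above.
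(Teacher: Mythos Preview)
Your approach is exactly the paper's: the Proposition is stated as a summary of the preceding paragraphs, which verify hypotheses~(i) and~(ii) of Lemma~\ref{MinCrit} precisely as you outline (the $\y$-integral by completing the square, and the positivity of the quadratic form via the substitution $G=fg$ together with the Fourier argument from the proof of Lemma~\ref{lemmagauss}).

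One point worth noting: you have in fact been more careful than the paper on the boundedness hypothesis. The paper asserts that choosing $\alpha<1$ makes the Lagrangian bounded, but your diagonalization in the variables $u=\x+\y$, $v=\x-\y$ shows that the exponent is $\tfrac{\alpha}{2}u^2+\tfrac{\alpha-2}{2}v^2$, so $\L\in L^\infty$ genuinely requires $\alpha\le 0$; for $\alpha\in(0,1)$ the kernel blows up along the diagonal. Your proposed remedies (relaxing the $L^\infty$ hypothesis in Lemma~\ref{MinCrit} to integrability against compactly supported $g$, or a compact-exhaustion argument) are the natural fixes, and indeed the approximation step in the proof of Lemma~\ref{MinCrit} only uses continuity of $\L$ together with finiteness of the integrals against the compactly supported truncations, so the extension goes through. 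The paper does not address this point.
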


As a concrete example, we consider the well-known {\em{Mehler kernel}}
(see for example~\cite[Section~1.5]{glimm+jaffe})
\[ E(\x,\y) = \frac{1}{\sqrt{1-\rho^2}} \:\exp \bigg(-\frac{\rho^2 (\x^2+\y^2)- 2\rho \x \y}{(1-\rho^2)} \bigg) \]
with~$\rho > 0$. Rescaling~$\x$ and~$\y$ according to
\[ \x,\y \rightarrow \sqrt{\frac{1-\rho^2}{\rho}} \: \x, \sqrt{\frac{1-\rho^2}{\rho}} \:\y \:, \]
the Mehler kernel becomes
\[ E(\x,\y) = \frac{1}{\sqrt{1-\rho^2}} \:\exp \Big(-\rho (\x^2+\y^2)- 2 \x \y \Big) \:. \]
This kernel is of the desired form~\eqref{Lgaussdef} if we choose
\[ \alpha = 1-\rho < 1 \:,\qquad \beta = \frac{\rho^2-1}{\rho} \:.  \]

We finally remark that this non-homogeneous example can be used as the starting point
for the construction of higher-dimensional examples with minimizing measures supported on
lower-dimensional subsets, exactly as explained for the Gaussian in Section~\ref{secexhyper}.

\section{Example: Static Causal Fermion Systems} \label{seccfsstatic}
In view of the physical applications, the most important example of a causal variational principle
is the causal action principle for causal fermion systems. In this context, the methods developed
in the present paper should apply to {\em{static}} systems.
This can be understood in general terms as follows. As worked out in detail in~\cite{linhyp},
the linearized field equations can be analyzed with energy methods inspired from the
theory of hyperbolic PDEs. If one considers time-independent solutions, hyperbolic equations
(like the wave equation) give rise to corresponding elliptic equations (like the Poisson equation).
This suggests that, in the static situation, the linearized field equations should be analyzed with
methods from elliptic PDEs. As already mentioned in the introduction, this analogy
was our starting point for developing the methods in this paper.

Before introducing the static setting, we explain why the methods developed in the
present paper do {\em{not}} apply to the causal action principle in the {\em{time-dependent
setting}}. A simple explanation is that elliptic methods are not suitable for solving
hyperbolic equations. On a more technical level, the reasons are more involved.
Apart from regularity assumptions and technicalities, the main restriction for our
methods to apply is that the second variation of the Lagrangian~\eqref{D2L}
should be finite and integrable, in the sense that the weight function~\eqref{nudef} be finite
almost everywhere.
These conditions are harder to fulfill in the time-dependent setting, because
the norm in~\eqref{D2L} may be singular on the light cone.
Another problem is related to the kernel of the Laplacian~$\Delta_N$.
A hyperbolic equation (like the scalar wave equation) typically has a large kernel
(namely, all homogeneous equations like plane scalar waves).
In our treatment with Sobolev-like function spaces, we always assume that the
inhomogeneity is orthogonal to this kernel, and the constructed solution is also orthogonal
to this kernel. Such a treatment, even if it applies mathematically, does not seem
useful if the kernel of~$\Delta_N$ is too large. This is why we do not expect
the methods in this paper to be helpful for analyzing time-dependent situations.
Instead, it is preferable to use the hyperbolic methods developed in~\cite{linhyp}.

Static causal fermion systems were first considered in~\cite{pmt}. We here
present a somewhat simpler setting where the constraints are built in
right from the beginning.

\subsection{Causal Fermion Systems with Fixed Local Trace} \label{seccap}
We begin with the general definition of a causal fermion system with fixed local trace.
\begin{Def} \label{defcfs} {\em{ %\hspace*{1em}
Given a separable complex Hilbert space~$\H$ with scalar product~$\la .|. \ra_\H$
and a parameter~$n \in \N$ (the {\em{spin dimension}}), we let~$\F \subset \Lin(\H)$ be the set of all
symmetric operators~$A$ on~$\H$ of finite rank which have trace one,
\beq \label{fixedtrace}
\tr A = 1 \:,
\eeq
and which (counting multiplicities) have
at most~$n$ positive and at most~$n$ negative eigenvalues. On~$\F$ we are given
a positive measure~$\rho$ (defined on a $\sigma$-algebra of subsets of~$\F$).
We refer to~$(\H, \F, \rho)$ as a {\em{causal fermion system with fixed local trace}}.
}}
\end{Def} \noindent
On~$\F$ we consider the topology induced by the operator norm
\[ %\label{supnorm}
\|A\| := \sup \big\{ \|A u \|_\H \text{ with } \| u \|_\H = 1 \big\} \:. \]
{\em{Spacetime}}~$M$ is defined to be the support of this measure,
\[ %\label{Mdef}
M := \supp \rho \subset \F \:. \]
It is a topological space (again with the topology induced by the operator norm).
The fact that the spacetime points are operators gives rise to
many additional structures which are {\em{inherent}} in the sense
that they only use information already encoded in the causal fermion system.
A detailed treatment can be found in~\cite[Section~1.1]{cfs}.

\subsection{The Reduced Causal Action Principle} \label{seccapreduced}
In order to single out the physically admissible
causal fermion systems, one must formulate physical equations. To this end, we impose that
the measure~$\rho$ should be a minimizer of the causal action principle,
which we now introduce. For any~$x, y \in \F$, the product~$x y$ is an operator of rank at most~$2n$. 
However, in general it is no longer symmetric because~$(xy)^* = yx$,
and this is different from~$xy$ unless~$x$ and~$y$ commute.
As a consequence, the eigenvalues of the operator~$xy$ are in general complex.
We denote the nontrivial eigenvalues counting algebraic multiplicities
by~$\lambda^{xy}_1, \ldots, \lambda^{xy}_{2n} \in \C$
(more specifically,
denoting the rank of~$xy$ by~$k \leq 2n$, we choose~$\lambda^{xy}_1, \ldots, \lambda^{xy}_{k}$ as all
the non-zero eigenvalues and set~$\lambda^{xy}_{k+1}, \ldots, \lambda^{xy}_{2n}=0$).
Given a parameter~$\kappa>0$ (which will be kept fixed throughout this paper),
we introduce the $\kappa$-Lagrangian and the causal action by
\begin{align*}
\text{$\kappa$-\em{Lagrangian:}} && \L(x,y) &= \frac{1}{4n} \sum_{i,j=1}^{2n} \Big( \big|\lambda^{xy}_i \big|
- \big|\lambda^{xy}_j \big| \Big)^2 + \kappa\: \bigg( \sum_{j=1}^{2n} \big|\lambda^{xy}_j \big| \bigg)^2 \\
\text{\em{causal action:}} && \Sact(\rho) &= \iint_{\F \times \F} \L(x,y)\: d\rho(x)\, d\rho(y) \:. %\label{Sdef}
\end{align*}
The {\em{reduced causal action principle}} is to minimize~$\Sact$ by varying the measure~$\rho$
under the
\[ \text{\em{volume constraint}} \qquad \rho(\F) = \text{const} \:. %\label{volconstraint} \:.
\]

This variational principle is obtained from the general causal action principle as introduced in~\cite[\S1.1.1]{cfs}
as follows. Using that minimizing measures are supported on operators of constant trace
(see~\cite[Proposition~1.4.1]{cfs}), we may fix the trace of the operators
and leave out the trace constraint. Moreover, 
by rescaling all the operators according to~$x \rightarrow \lambda x$ with~$\lambda \in \R$,
we may assume without loss of generality that this trace is equal to one~\eqref{fixedtrace}.
Next, the $\kappa$-Lagrangian arises when treating the so-called boundedness constraint
with a Lagrange multiplier term. Here we slightly simplified the setting
by combining this Lagrange multiplier term with the Lagrangian.

\subsection{Static Causal Fermion Systems} \label{secstatic}
We now specialize our setting to the static case.
Adapting the causal action principle to static causal fermion systems and imposing a
regularity condition, we will then get into the setting of causal variational principles (Sections~\ref{secSstatic}
and~\ref{secregular}).

\begin{Def} \label{defstatic}
Let~$(\scrU_t)_{t \in \R}$ be a strongly continuous one-parameter group of unitary transformations 
on the Hilbert space~$\H$ (i.e.\ $s$-$\lim_{t' \rightarrow t} \scrU_{t'}= \scrU_t$ and~$\scrU_t \scrU_{t'} = \scrU_{t+t'}$).
The causal fermion system~$(\H, \F, \rho)$ is {\bf{static with respect to~$(\scrU_t)_{t \in \R}$}}
if it has the following properties:
\begin{itemize}[leftmargin=2em]
\item[\rm{(i)}] Space-time $M:= \supp \rho \subset \F$ is a topological product,
\beq \label{inffin}
M = \R \times N \qquad \text{or} \qquad M = S^1 \times N \:.
\eeq
We write a space-time point~$x \in M$ as~$x=(t,\x)$ with~$t \in \R$ and~$\x \in N$,
where in the case~$M=S^1 \times N$ we identify the circle with the unit interval via
\[ S^1 \simeq \R \text{ mod } \Z \:. \]
\item[\rm{(ii)}] The one-parameter group~$(\scrU_t)_{t \in \R}$ leaves
the measure~$\rho$ invariant, i.e.
\[ \rho\big( \scrU_t \,\Omega\, \scrU_t^{-1} \big) = \rho(\Omega) \qquad \text{for all
	$\rho$-measurable~$\Omega \subset \F$} \:. \]
Moreover,
\[ \scrU_{t'}\: (t,\x)\: \scrU_{t'}^{-1} = (t+t',\x) \]
(where in the case~$M=S^1 \times N$ the sum~$t+t'$ is taken modulo~$\Z$).
\end{itemize}
\end{Def} \noindent
The two cases in~\eqref{inffin} are referred to as spacetimes
of {\em{finite}} and {\em{infinite lifetime}}, respectively. Here we can treat both cases together.

Given a static causal fermion system, we also consider the set of operators
\[ N := \{ (0, \x) \} \subset \F \:. \]
The measure~$\rho$ induces a measure~$\mu$ on~$N$ defined by
\[ \mu(\Omega) := \rho\big( [0,1] \times \Omega \big) \:. \]
The fact that the causal fermion system is static implies that~$\rho([t_1,t_2] \times \Omega) = (t_2-t_1)\,
\mu(\Omega)$, valid for all~$t_1<t_2$. This can be expressed more conveniently as
\[ %\label{fubini}
d\rho = dt \,d\mu \:. \]

\subsection{The Causal Action Principle in the Static Setting} \label{secSstatic}
The causal action principle can be formulated in a straightforward manner
for static causal fermion systems. The only point to keep in mind is that, when considering
families of measures, these measures should all be static with respect to the same
group~$(\scrU_t)_{t \in \R}$ of unitary operators (see Definition~\ref{defstatic}).
In order to make this point clear, right from the beginning we
choose a group of unitary operators~$(\scrU_t)_{t \in \R}$ on~$\H$.
We denote the equivalence classes of~$\F$ under the action of the one-parameter group by
\[ \G:= \F/\R := \{ \scrU_t \,x\, \scrU_t^{-1} \:|\: x \in \F, t \in \R \} \:. \]
We denote the elements of~$\F/\R$ just as the spatial points by~$\x$ and~$\y$.
Next, we define the following functions:
\begin{align}
\text{\em{static $\kappa$-Lagrangian}} &&
\L(\x, \y) &:= \int_I \L \big((0,\x), (t,\y) \big) \: dt \:, \label{Ltime}
\end{align}
where~$I$ is a generalized interval chosen in infinite lifetime as~$I=\R$,
and in finite lifetime as~$I=[0,1)$.
Note that, for ease in notation, we use the same symbol for the static as for the original Lagrangian. But they can be distinguished by their arguments, because the static Lagrangian depends on spatial points in~$\G$ whereas the original Lagrangian is defined on space-time operators in~$\F$. 
For a measure~$\rho$ which is static with respect to~$(\scrU_t)_{t \in \R}$, we introduce the
\begin{align*}
\text{\em{static causal action}} && \Sact(\mu) &= \int_{\F/\R} d\mu(\x) \int_{\F/\R} d\mu(\y) \:\L(\x, \y) \:.%\label{Sdefstatic}
\end{align*}
The {\em{static causal action principle}} is to minimize~$\Sact$ by varying the measure~$\mu$
within the class of regular Borel measures on~$\F/\R$ under the
\begin{align*}
\text{\em{volume constraint}} && \mu(\F/\R) = \text{const} \:.% \label{volconstraintstatic} \\
\end{align*}

\subsection{The Regular Setting as a Causal Variational Principle} \label{secregular}
We now explain how to get to the setting of causal variational principles introduced
in Section~\ref{seccvp}. In order to give the set of operators a manifold structure,
we assume that~$\rho$ is {\em{regular}} in the sense that all operators in its support
have exactly~$n$ positive and exactly~$n$ negative eigenvalues.
This leads us to introduce the set~$\F^\text{reg}$ 
as the set of all linear operators~$F$ on~$\H$ with the following properties:
\begin{itemize}[leftmargin=2em]
\item[(i)] $F$ is selfadjoint, has finite rank and (counting multiplicities) has
exactly~$n$ positive and~$n$ negative eigenvalues. \\[-0.8em]
\item[(ii)] The trace is constant, i.e.~$\tr(F) = c>0$ (with $c$ independent of $F \in \F^\text{reg}$).
\end{itemize}

At this point, one must distinguish the cases that~$\H$ is finite- or infinite-dimensional.
In the {\em{infinite-dimensional setting}}, the set~$\F^\text{reg}$ is an infinite-dimensional
Banach manifold (for details see~\cite{banach}).
In order to get into the setting of causal variational principles with a locally compact manifold~$\G$,
one must restrict attention to a finite-dimensional submanifold of~$\F^\text{reg}$.
Clearly, this submanifold must contain the supports of both measures~$\rho$ and all its considered variations,
and the unitary group~$(\scrU_t)_{t \in \R}$ must map the submanifold to itself.
Moreover, the vector fields of the jets needed for the analysis must all be
tangential to this submanifold. Then we can simply choose~$\G$ as the equivalence classes of this submanifold
under the action of the group~$(\scrU_t)_{t \in \R}$.
One also needs to verify that the resulting static $\kappa$-Lagrangian~\eqref{Ltime}
has the properties~(i)--(iv) in Section~\ref{seccvp}.

The {\em{finite-dimensional setting}} is considerably easier.
In this case, the set~$\F^\text{reg}$ has a smooth manifold structure
(see the concept of a flag manifold in~\cite{helgason} or the detailed construction
in~\cite[Section~3]{gaugefix}).
Assuming that the action of the group~$(\scrU_t)_{t \in \R}$ on~$\F^\text{reg}$ is proper and has no fixed points,
the quotient
\[ \G := \F^\text{reg} / \R \]
is again a manifold. In this way,
we get into the setting of causal variational principles as introduced in Section~\ref{seccvp}.
The resulting static $\kappa$-Lagrangian indeed has the properties~(i)--(iv) in Section~\ref{seccvp}
(for details see~\cite[Section~3.3]{pmt}).

\section{Conclusion and Outlook} \label{secoutlook}
In the present paper we showed that linearized fields of causal variational principles
can be analyzed with functional analytic tools in suitable function spaces.
This opens the door to the analysis of corresponding nonlinear equations, in particular of the
restricted EL equations~\eqref{ELtest}.
For the explicit analysis one can proceed perturbatively as worked out in general in~\cite{perturb}.
But one can also use and adapt methods of nonlinear elliptic partial differential equations.
In particular, it seems a promising strategy to
analyze nonlinear equations with fixed point methods and suitable a-priori estimates.

\appendix
\section{The Kernel of the Laplacian} \label{appnoscalar}
%\section{Linearized Solutions without Scalar Components} 
The kernel of the Laplacian~$\Delta_N$ plays an important role for the solvability of the
linearized field equations~\eqref{weakpoisson}. Namely, since the operator~$\Delta_N$ is symmetric, its image is the orthogonal
complement of its kernel. Therefore, one sees immediately from~\eqref{weakpoisson}
that the weak linearized field equations do not
admit solutions unless the inhomogeneity~$\hat{\w}$ is in
the orthogonal complement of the kernel of~$\Delta_N$. This is reflected in our general existence
results by the fact that the vectors in the weighted Hilbert spaces~$\h^k$ for~$k>0$ are in the orthogonal
complement of the kernel of the operator~$\Delta_N$ (see~\eqref{orthokern}).
With this in mind, it is an important task to analyze this kernel.

In this appendix we show that, under additional smoothness assumptions, there is indeed an explicit
class of jets which are in the kernel. We also show how these jets can be treated
when constructing inhomogeneous solutions. We need to make the following assumptions.
\begin{Def} \label{defsms}
The support~$N:= \supp \mu$ has a {\bf{smooth manifold structure}} if
the following conditions hold:
\bitem
\item[\rm{(i)}] $N$ is a $k$-dimensional  smooth, oriented and connected submanifold of~$\F$.
Equipped with a smooth atlas, we also denote it by~$\scrN$.
\item[\rm{(ii)}] In a chart~$(x,U)$ of~$\scrN$, the measure~$\mu$ is absolutely continuous with respect
to the Lebesgue measure with a smooth, strictly positive weight function,
\[ %\label{hdef}
d\mu = h(x)\: d^kx \qquad \text{with} \quad h \in C^\infty(\scrN, \R^+) \:. \]
\eitem
\end{Def} 

Let~$v \in \Gamma(N, TN)$ be a vector field. Then
its {\em{divergence}} $\div v \in C^\infty(N, \R)$ may be defined by the relation
\[ \int_N \div v\: \eta(x)\: d\mu = -\int_N D_v \eta(x)\: d\mu(x) \:, \]
to be satisfied by all test functions~$\eta \in C^\infty_0(N, \R)$.
In a local chart~$(x,U)$, the divergence is computed by
\[ %\label{defdiv}
\div v = \frac{1}{h}\: \partial_j \big( h\, v^j \big) \]
(where, following the Einstein summation convention, we sum over~$j=1,\ldots,k$).

When integrating by parts using the Gau{\ss} divergence theorem, we need to make
sure that we do not get boundary values at infinity. To this end, it is convenient
to choose the Riemannian metric~$g_x$ introduced before~\eqref{vsprod}
to be compatible with the smooth manifold structure in the following sense.

\begin{Def} \label{defgadapted} The Riemannian metric~$g$ on~$\Gamma_x$ is {\bf{adapted at infinity}} if
there is a sequence~$(\eta_n)_{n \in \N}$ of compactly supported functions,
$\eta_n \in C^\infty_0(N, \R)$, with the following properties:
\bitem
\item[{\rm{(i)}}] The functions~$\eta_n$ are non-negative, monotone increasing and exhaust~$N$
in the sense that for any compact set~$K \subset N$ there is~$N$ with~$\eta_n|_K \equiv 1$
for all~$n \geq N$.
\item[{\rm{(ii)}}] The derivatives tend uniformly to zero, i.e.
\[ \lim_{n \rightarrow \infty} \sup_{x \in N} \|D \eta_n\|_x = 0 \:, \]
where~$\|.\|_x$ is the norm on~$T_xN \subset \Gamma_x$ induced by the Riemannian metric~$g$.
\eitem
\end{Def} \noindent
This definition poses an implicit condition on the behavior of the Riemannian metric~$g$
at infinity. A typical example is that~$N$ has one asymptotic end in the sense
that there is a compact subset~$K \subset N$ such that~$N \setminus K$ is diffeomorphic
to~$\R^k$ minus a closed ball. In this case, one can choose the Riemannian metric~$g$
outside~$K$ as the pull-back of the Euclidean metric, and extend it smoothly to~$N$.
Then one can choose the~$\eta_n$
as the pull-back of cutoff functions in~$\R^k$ which fall off for example in annular regions~$B_{2n} \setminus B_n$.

\begin{Def} \label{definner} An {\bf{inner solution}} is a jet~$\v$ of the form
\[ \v = (\div v, v) \qquad \text{with} \qquad v \in \Gamma(N, TN) \:. \]
We make the following regularity and decay assumptions:
\bitem
\item[{\rm{(i)}}] The vector field~$v$
can be extended to a vector field~$\tilde{v} \in \Gamma(U, T\F)$ defined in a neighborhood~$U$ of~$N$
such that the directional derivative~$(D_{1,\tilde{v}} + D_{2,\tilde{v}}) \L(x,y)$
exists for all~$x \in U$ and~$y \in N$ and is integrable in~$y$, i.e.\
\[ \int_N \Big| \big(D_{1,\tilde{v}} + D_{2,\tilde{v}} \big) \L(x,y) \Big| \: d\mu(y) < \infty \qquad
\text{for all~$x \in U$}\:. \]
Moreover, the directional derivative~$D_{\tilde{v}} \ell(x)$ exists for all~$x \in U$ and is continuous in~$U$.
\item[{\rm{(ii)}}] The integral
\[ \int_N \L(x,y)\: \|\v(y)\|_y \:d\mu(y) \]
is finite and bounded locally uniformly in a neighborhood of~$N$
(where~$\|.\|_y$ is again the norm corresponding to the scalar product~\eqref{vsprod}
and the Riemannian metric is adapted at infinity according to Definition~\ref{defgadapted}).
\item[{\rm{(iii)}}] For any test jet~$\u \in \Jtest_\mu$, the
directional derivative~$D_v \u$ (computed in the same charts used for computing the higher derivatives in Definition~\ref{defJvary}) is again in~$\Jtest_\mu$.
\eitem
The vector space of all inner solutions is denoted by~$\Jin_\mu$.
\end{Def} \noindent
Note that~(i) implies that every inner solution is in~$\J^1_\mu \cap \Jdiff_\mu$
(see~\eqref{Jdiffdef} and Definition~\ref{defJvary}).

The name ``inner {\em{solution}}'' is justified by the following lemma:
\begin{Lemma} Let~$\mu$ be a critical measure (meaning that the
EL equations~\eqref{EL} hold).
Then every inner solution~$\v \in \Jin_\mu$ is a solution of the linearized field equations, i.e.\
\[ \la \u, \Delta \v \ra|_N = 0 \qquad \text{for all~$\u \in \Jtest_\mu$} \:. \]
\end{Lemma}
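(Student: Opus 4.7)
The plan is to compute $\la \u, \Delta \v \ra(\x)$ directly from the definition~\eqref{Lapdef} for an inner solution $\v = (\div v, v)$, reduce it by integration by parts to the compact form $\nabla_\u \nabla_\v \ell(\x)$, and then invoke the restricted EL equations together with the tangency of $v$ to $N$ to conclude the vanishing on $N$.

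First I would expand the integrand using $\v = (\div v, v)$, with $\tilde v$ the neighborhood extension of $v$ from Definition~\ref{definner}(i):
\[
(\nabla_{1,\v} + \nabla_{2,\v}) \L(\x,\y) = (\div v)(\x)\, \L(\x,\y) + (\div v)(\y)\, \L(\x,\y) + (D_{1,\tilde v} + D_{2,\tilde v}) \L(\x, \y)\:.
\]
The crucial step is the integration-by-parts identity
\[
\int_N \big[(\div v)(\y)\, \L(\x,\y) + D_{2,\tilde v} \L(\x, \y) \big]\, d\mu(\y) = 0\:,
\]
which follows immediately from the defining relation of the divergence for compactly supported integrands. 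In our setting I would justify it by approximating with the cutoff sequence $(\eta_n)$ from Definition~\ref{defgadapted} applied to $\y \mapsto \L(\x, \y)$ and passing to the limit using dominated convergence together with the integrability assumption in Definition~\ref{definner}(ii); the boundary error term involving $D \eta_n$ tends to zero because $\|D\eta_n\|_\x$ does uniformly. Combining the surviving terms with $\int_N \L(\x,\y)\, d\mu(\y) = \ell(\x) + \s$ and the convention~\eqref{conpartial}, the quantity inside $\nabla_\u$ reduces to $(\div v)(\x)\, \ell(\x) + D_{\tilde v} \ell(\x) = \nabla_\v \ell(\x)$, yielding the clean identity
\[
\la \u, \Delta \v \ra(\x) = \nabla_\u \nabla_\v \ell(\x).
\]

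It remains to show $\nabla_\u \nabla_\v \ell|_N = 0$. Writing $\u = (a, u)$, I would use the commutator identity
\[
\nabla_\u \nabla_\v \ell = \nabla_\v \nabla_\u \ell + (D_u \div v - D_v a)\, \ell + D_{[u, v]} \ell.
\]
On $N$ the middle term vanishes because $\ell|_N = 0$. The first term vanishes because by the restricted EL equations~\eqref{ELtest} we have $\nabla_\u \ell \equiv 0$ on $N$, and its tangential derivative along the tangential vector field $v$ therefore also vanishes on $N$. For the commutator term I would split $D_{[u,v]} \ell = D_{D_u \tilde v} \ell - D_{D_v u} \ell$: the piece $D_{D_v u} \ell|_N = 0$ follows from restricted EL, since Definition~\ref{definner}(iii) ensures $D_v u \in \Gtest$; and the remaining piece $D_{D_u \tilde v(\x)} \ell(\x) = (d\ell)_\x\big(D_u \tilde v(\x)\big)$ vanishes because the global EL equations~\eqref{EL} give $\ell \geq 0$ with $\ell|_N = 0$, forcing $d\ell|_N = 0$ as a $1$-form in the smooth-manifold structure of Definition~\ref{defsms}.

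The main obstacle will be rigorously justifying $d\ell|_N = 0$ (equivalently, $D_{D_u \tilde v} \ell|_N = 0$) in the present lower-semicontinuous setting, where classical transversal differentiability of $\ell$ is not available a priori. Under the smooth-manifold structure of Definition~\ref{defsms} together with the differentiability in Definition~\ref{definner}(i), this should follow by combining the minimality of $\ell$ on $N$ with the regularity of $D_{\tilde v} \ell$ furnished by the extension; alternatively, one may enforce the conclusion by assuming that $\Gtest_\x$ is rich enough to contain $D_u \tilde v(\x)$ for each $\x \in N$, in which case $(d\ell)_\x\big(D_u \tilde v(\x)\big) = 0$ is a direct consequence of the restricted EL equations.
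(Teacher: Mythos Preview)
The paper does not supply its own proof here but defers to~\cite[Lemma~3.3]{fockbosonic}, so there is no direct comparison to be made. Your overall strategy is the right one, and the first half of your argument---the integration by parts in~$\y$ reducing~$\la\u,\Delta\v\ra(\x)$ to~$\nabla_\u\nabla_\v\ell(\x)$---is correct and well justified via the cutoffs of Definition~\ref{defgadapted} together with Definition~\ref{definner}(ii).

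The gap is in the second step, where you silently switch conventions. The identity~$\la\u,\Delta\v\ra(\x)=\nabla_\u\nabla_\v\ell(\x)$ that you derived holds under the paper's convention that \emph{jets are never differentiated} (see the discussion surrounding~\eqref{conpartial}); under that convention~$\nabla_\u\nabla_\v\ell=\nabla_\v\nabla_\u\ell$ automatically, and your commutator identity---which treats~$\nabla_\u,\nabla_\v$ as genuine first-order differential operators acting on everything---is not applicable. Expanding with the convention gives
\[
\nabla_\u\nabla_\v\ell \;=\; a\,b\,\ell \;+\; a\,D_{\tilde v}\ell \;+\; b\,D_u\ell \;+\; D^2\ell(u,\tilde v)\:,
\]
and on~$N$ the first three terms vanish for the reasons you already listed. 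For the Hessian term one differentiates the restricted EL identity~$D_u\ell|_N\equiv 0$ tangentially along~$v$ to obtain~$D^2\ell(v,u)+D_{D_v u}\ell=0$ on~$N$; Definition~\ref{definner}(iii) places~$D_v u\in\Gtest$, so~$D_{D_v u}\ell|_N=0$ by~\eqref{ELtest} and the Hessian term vanishes as well. The quantity~$D_{D_u\tilde v}\ell$ that you identify as the ``main obstacle'' simply never enters---it is an artifact of reverting to genuine operator composition. Indeed, on~$N$ the difference between the genuine composition and the paper's~$\nabla_\u\nabla_\v\ell$ is exactly~$D_{D_u\tilde v}\ell|_N$, so your commutator computation, which shows that the genuine~$\nabla_\u(\nabla_\v\ell)|_N$ reduces to this single term, has in effect already established that the convention version vanishes; you just need to read off the conclusion correctly rather than trying to kill the obstacle separately.
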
 \noindent
The proof is given in~\cite[Lemma~3.3]{fockbosonic}.

We next show that for any function~$a$ on~$N$ one can find an inner solution
whose scalar component coincides with~$a$. If~$N$ were compact, the analogous
statement would be the infinitesimal version of Moser's theorem (see for
example~\cite[Section~XVIII, \S2]{langDG}).
Here we give a detailed proof if~$N$ is non-compact,
based on~\cite[Theorem~1.2 in Section~XVIII]{langDG}. 
\begin{Prp} \label{prpmoser}
Assume that~$N$ has a smooth manifold structure
and that the Riemannian metric~$g$ on~$\Gamma_x$ is adapted at infinity
(see Definitions~\ref{defsms} and~\ref{defgadapted}).  Then to any $a\in C^{\infty}(N)$ we can find a vector field $u\in \Gamma(N,TN)$ with $\div u=a$.
\end{Prp}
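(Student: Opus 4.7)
The plan is to convert the divergence equation into an exactness problem for differential forms on~$\scrN$ and then invoke the vanishing of the top-degree de Rham cohomology of a non-compact connected manifold, which is essentially the content of the theorem cited from~\cite[Chapter~XVIII]{langDG}.

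Since~$\scrN$ is oriented and carries the strictly positive smooth density~$h$, the measure~$\mu$ is naturally identified with a nowhere-vanishing top-degree differential form, which I continue to denote by~$\mu$. The contraction map $u \mapsto \iota_u \mu$ is then a $C^\infty(\scrN)$-linear isomorphism $\Gamma(N, TN) \to \Omega^{k-1}(\scrN)$. Because~$\mu$ has top degree, $d\mu = 0$, so Cartan's magic formula reduces to $\mathcal{L}_u \mu = d(\iota_u \mu)$. Combined with the defining identity $\mathcal{L}_u \mu = (\div u)\, \mu$, the equation $\div u = a$ becomes
\[
d\omega = a\, \mu \:,\qquad \omega := \iota_u \mu \in \Omega^{k-1}(\scrN) \:.
\]
Since $a\,\mu$ has top degree, it is automatically closed, and the problem reduces to exactness of every closed $k$-form on~$\scrN$.

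This exactness is precisely the vanishing $H^{k}_{\text{dR}}(\scrN) = 0$, which holds because~$\scrN$ is a non-compact, connected, oriented smooth manifold of dimension~$k$. I would follow the exhaustion strategy from~\cite[Theorem~1.2 in Section~XVIII]{langDG}: choose a nested exhaustion of~$\scrN$ by relatively compact open sets (which the functions~$\eta_n$ from Definition~\ref{defgadapted} readily provide) and inductively extend a smooth primitive from each set to the next. At each inductive step, any obstruction to extending is a compactly supported closed top form whose integral can be shifted arbitrarily far into the complement of the current set using non-compactness of~$\scrN$, and can therefore be corrected away. Once $\omega$ has been constructed, the vector field $u$ is recovered by inverting the isomorphism $u \mapsto \iota_u \mu$.

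The main obstacle is the globalization step. Locally, the equation $\partial_j(h\, u^j) = h\, a$ is trivially solvable --- for instance by taking $u = (u^1, 0, \ldots, 0)$ with $u^1(x) = h(x)^{-1} \int_0^{x^1} (h\,a)(s, x^2, \ldots, x^k)\: ds$ --- but gluing such local solutions with a partition of unity produces non-trivial correction terms on overlaps. The inductive exhaustion argument, which genuinely uses that~$N$ is non-compact, is precisely the device that absorbs these corrections and yields a globally defined smooth~$u$.
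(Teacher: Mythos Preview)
Your argument is correct and follows the same overall strategy as the paper: translate $\div u = a$ into exactness of the top form $a\,\mu$ on~$\scrN$, obtain a primitive by combining Lang's result with an exhaustion of the (non-compact) manifold, and then convert back to a vector field. The paper implements the middle step more explicitly than you do---it builds a partition of unity $\phi_n := \eta_n - \eta_{n-1}$ from the functions of Definition~\ref{defgadapted}, inductively chooses constants $c_n$ so that each compactly supported piece $\phi_{n+1}\,\omega + c_{n}\phi_{n}\psi - c_{n+1}\phi_{n+1}\psi$ has zero integral, applies \cite[\S XVIII, Theorem~1.2]{langDG} to each piece separately, and then sums the resulting compactly supported primitives via a telescoping identity---whereas you only sketch the exhaustion argument. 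In the final step the paper recovers $u$ via a conformal rescaling of $g$ and the Hodge star; your direct inversion of the isomorphism $u \mapsto \iota_u\mu$ is equivalent and arguably cleaner, since it avoids bringing a metric into the construction at all.
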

\begin{proof} We again consider the functions~$\eta_n$ in Definition~\ref{defgadapted}.
By construction, these functions~$\eta_n$ are monotone increasing and exhaust~$N$
in the sense that for any compact set~$K \subset N$ there is~$N$ with~$\eta_n|_K \equiv 1$
for all~$n \geq N$. Next, we introduce the non-negative functions
\[ \phi_n:= \left\{ \begin{array}{cl}
	\eta_1 \quad &\text{if~$n=1$} \\[0.1em]
	\eta_{n}-\eta_{n-1} \quad & \text{if~$n>1$} \:. \end{array} \right. \]
Possibly by leaving out some of these functions we can arrange that none of the~$\phi_n$
is identically equal to zero. Then all $\phi_n$ are in $C^\infty_0(N,\R)$ and form a locally finite partition of unity. Moreover, the sets~$N_n$ defined by
\[ N_n:= \left\{ \begin{array}{cl}
	\supp\phi_1 \quad &\text{if~$n=1$} \\[0.1em]
	\supp\phi_n \cup \supp\phi_{n-1} \quad & \text{if~$n>1$} \:.
\end{array} \right. \]
form a locally finite covering of~$N$.

We adapt the proof of~\cite[Lemma 2.7]{pmt} to our setting. We
express the measure $\mu$ using a volume form $\psi \in \Lambda^k(N)$, i.e.:
\begin{flalign*}
	\mu(U) = \int_U \psi \qquad \mathrm{for\;all\;compact\;}U\subseteq N\;.
\end{flalign*}
Similarly, we can also find a volume form $\omega \in \Lambda^k(N)$ representing $a\mu$:
\begin{flalign*}
	\int_U a(x) \:d\mu(x) = \int_U \omega \;,\;\;\;\mathrm{for\;all\;compact\;}U\subseteq N\;.
\end{flalign*}
Now choose $c_1 \in \R$ such that:
\begin{flalign*}
	0=\int_N(\phi_1 \,\omega - c_1\, \phi_1 \,\psi)= \int_{N_1}(\phi_1 \,\omega - c_1\, \phi_1\, \psi)\;.
\end{flalign*}
Then, in view of~\cite[§ XVIII, Theorem 1.2]{langDG} applied to $N_1$ (which, due to our assumptions, is a connected oriented manifold without boundary), there exists a compactly supported $(k-1)$-form $\eta_1 \in \Lambda_0^{(k-1)}(N_1)$ with
\begin{flalign*}
	(\phi_1 \,\omega - c_1 \,\phi_1 \,\psi)\big|_{N_1} = d\eta_1\:.
\end{flalign*}
Clearly, we can extended~$\eta_1$ by zero to~$N$.

Now for any $n\in \N$, inductively choose $c_{n+1} \in \R$ such that
\begin{flalign*}
\int_N \Big( \phi_{n+1} \,\omega + c_n \,\phi_n \,\psi - c_{n+1}\, \phi_{n+1}\psi \Big)
= \int_{N_{n+1}} \Big( \phi_{n+1} \,\omega + c_n \, \phi_n\, \psi - c_{n+1}\, \phi_{n+1}\psi \Big)= 0\:.
\end{flalign*}
Now~\cite[§ XVIII, Theorem 1.2]{langDG} again provides a $(k-1)$-form $\eta_{n+1}\in \Lambda^{(k-1)}_0(N_{n+1})$ such that
\begin{flalign*}
	\phi_{n+1} \, \omega + c_n\, \phi_n\, \psi - c_{n+1}\, \phi_{n+1}\psi = d\eta_{n+1}\:,
\end{flalign*}
where we again extended~$\eta_{n+1}$ by zero to~$N$. For ease in notation, from now on by $\eta_{n}$ we always mean the smooth  extensions to~$N$. Then each~$\eta_n$ has compact support in $N_n$.

As $(N_n)_{n\in \N}$ is locally finite, the series~$\eta:= \sum_{n=1}^{\infty}\eta_n$ converges
and can be computed to be
\begin{flalign*}
d\eta &= \sum_{n=1}^{\infty}d\eta_n = \phi_1\, \omega - c_1\, \phi_1\, \psi
+ \sum_{n=1}^{\infty}\Big( \phi_{n+1}\, \omega + c_n\, \phi_n\, \psi-c_{n+1}\, \phi_{n+1}\, \psi \Big) \\
&= - c_1\, \phi_1\, \psi +\omega \sum_{n=1}^{\infty} \phi_{n}+ \sum_{n=1}^{\infty} \Big( c_n\, \phi_n\, \psi
- c_{n+1}\, \phi_{n+1}\, \psi\Big)\:.
\end{flalign*}
The last sum is telescopic, giving $c_1\phi_1\psi$, which cancels the first summand in the last line. Furthermore, as $(\phi_n)_{n\in\N}$ is a partition of unity, the second summand in the last line reduces to $\omega$.
We conclude that	\begin{flalign*}
	d\eta = \omega\;.
\end{flalign*}
It remains to find a vector field $u\in \Gamma(N,TN)$ such that $\div u d\mu = d\eta$. This can be done just as described in~\cite[proof of Lemma~2.7]{pmt}: By a conformal transformation of the metric~$g$
we arrange that the corresponding volume form coincides with~$\mu$. Then we set
\[ u^{\alpha} := \hat{g}^{\alpha \beta}(\star \eta)_{\beta} \]
with the Hodge star $\star: \Lambda^{(k-1)}(N) \rightarrow \Lambda^1(N)$.  Then $u$ has the desired property (up to a sign depending on the dimension of $N$).
\end{proof}

Let us discuss the significance of the inner solutions. As already mentioned in
Section~\ref{secgaussian} in a simple example, inner solutions can be regarded as
infinitesimal generators of transformations of~$N$ which leave the measure~$\mu$ unchanged.
Similar to gauge transformations, they describe infinitesimal symmetries
of the system. Using these symmetry transformations, one can simplify the form of the jet spaces.
For example, under suitable assumptions
we can remove the scalar components of linearized solutions, as we now explain.
Let~$\v=(b,v) \in \Jvary$. Then, according to~\eqref{Jvarydef}, the scalar component~$b$
has compact support. We choose~$a=-b$. Applying Proposition~\ref{prpmoser}, there
is a smooth vector field~$u$ with~$\div u = a$. If~$a$ has mean zero,
\[ \int_N a(\x)\: d\mu(\x) = 0 \:, \]
then~$u$ can be chosen to be also compactly supported
(as one sees immediately from the above proof or alternatively in~\cite[§ XVIII, Theorem 1.2]{langDG}).
Otherwise, the vector field~$u$ typically decays at infinity, 
but is not compactly supported.
As a simple example, in the case of one asymptotic end introduced after Definition~\ref{defgadapted},
the vector field~$h u$ is divergence-free in the asymptotic end with respect to the Euclidean metric.
Applying the Gauss divergence theorem, the flux of this vector field through a sphere of radius~$r$
does not depend on $r$, making it possible to arrange the decay rate
\[ (h u)(\x) \lesssim \frac{1}{\|x\|^{k-1}} \]
(where~$\|.\|$ is the Euclidean norm).
As a consequence, the integral in~\eqref{normadapt} is finite if~$k \geq 3$ and~$d\mu$
goes over asymptotically to the Lebesgue measure.
More generally, here we assume for simplicity that the resulting jet~$\u := (a,u)$ has a finite adapted norm,
$\norm \u \norm<\infty$. Then we can redefine~$\Jvary$ as all the jets~$\v+\u$.
In contrast to~\eqref{Jvarydef}, the vector components of the resulting jets are no longer
compactly supported, but the adapted norm is still finite,
making sure that our existence proof
still goes through. As a result of this procedure, the scalar components of all linearized solutions vanish.
As a further advantage, the kernel of the operator~$\Delta_N$ has become smaller.

We finally point out that, even after the above procedure, in typical examples the Laplacian~$\Delta_N$
still has a non-trivial kernel. For example, every compactly supported
divergence-free vector field~$v$ gives rise to an inner solution~$(0,v)$ with
vanishing scalar component, which lies in the kernel of the Laplacian.
In order to mod out the corresponding infinitesimal symmetry transformations
in the construction of inhomogeneous linearized field equation, 
we set up the Hilbert space formulation in such a way that the inhomogeneity~$\hat{\w}$ must be in the orthogonal complement of the kernel of~$\Delta_N$. Likewise, the constructed solution is orthogonal to the kernel. In this way,
we can solve the linearized field equations modulo vectors in the kernel of the Laplacian.

\Thanks{{{\em{Acknowledgments:}}
We are grateful to the referee for helpful comments on the manuscript.
M.L.\ gratefully acknowledges support by the Studienstiftung des deutschen Volkes and the Marianne-Plehn-Programm.

%\bibliographystyle{amsplain}
%\bibliography{../felix}
\providecommand{\bysame}{\leavevmode\hbox to3em{\hrulefill}\thinspace}
\providecommand{\MR}{\relax\ifhmode\unskip\space\fi MR }
% \MRhref is called by the amsart/book/proc definition of \MR.
\providecommand{\MRhref}[2]{%
  \href{http://www.ams.org/mathscinet-getitem?mr=#1}{#2}
}
\providecommand{\href}[2]{#2}

\end{document}